\title{On the Containment Problem for Linear Sets}
\author{Hans U.~Simon \\ 
Department of Computer Science, Ruhr~University Bochum, Germany \\
E-Mail Address: {\tt hans.simon@rub.de}}
\begin{document}
\newtheorem{theorem}{Theorem}[section]
\newtheorem{lemma}[theorem]{Lemma}
\newtheorem{remark}[theorem]{Remark}
\newtheorem{bemerkung}[theorem]{Bemerkung}
\newtheorem{satz}[theorem]{Satz}
\newtheorem{fact}[theorem]{Fact}
\newtheorem{claim}[theorem]{Claim}
\newtheorem{definition}[theorem]{Definition}
\newtheorem{corollary}[theorem]{Corollary}
\newtheorem{folgerung}[theorem]{Folgerung}
\newtheorem{folgerungdefinition}[theorem]{Folgerung und Definition}
\newtheorem{examples}[theorem]{Examples}
\newtheorem{example}[theorem]{Example}
\newtheorem{beispiel}[theorem]{Beispiel}
\newtheorem{beobachtung}[theorem]{Beobachtung}
\newenvironment{proof}{\medskip\noindent {\bf Proof}}{\hfill$\bullet$\par\bigskip}
\newenvironment{beweis}{\medskip\noindent {\bf Beweis}}{\mbox{}\hfill$\bullet$\par\bigskip}
\newenvironment{open}{\medskip\noindent {\bf Open Problems}\begin{enumerate}}{\end{enumerate}\par\bigskip}
\newenvironment{sketch}{\medskip\noindent {\bf Sketch of Proof}}{\hfill$\bullet$\par\bigskip}
\newenvironment{beweisskizze}{\medskip\noindent {\bf Beweisskizze}}{\hfill$\Box$\par\bigskip}
\newenvironment{claimproof}{\medskip\noindent {\bf Proof of the Claim}}{\hfill$\bullet$\par\bigskip}
\newcommand{\nl}{\mbox{}\newline}
\newcommand{\MDNF}{\mbox{MDNF}}
\newcommand{\outdeg}{\mbox{deg}}
\newcommand{\indeg}{\mbox{deg}}
\newcommand{\BOX}{\mbox{BOX}}
\newcommand{\EQ}{\mbox{EQ}}
\newcommand{\MQ}{\mbox{MQ}}
\newcommand{\ACC}{\mbox{ACC}}
\newcommand{\REJ}{\mbox{REJ}}
\newcommand{\RSA}{\mbox{RSA}}
\newcommand{\INSEC}{\mbox{INSEC}}
\newcommand{\Log}{{\mathcal L}}
\newcommand{\NL}{{\mathcal NL}}
\newcommand{\POL}{\mathit{POL}}
\newcommand{\pad}{\mathit{pad}}
\newcommand{\desc}{\mathit{desc}}
\newcommand{\size}{\mathit{size}}
\newcommand{\Pol}{\mathit{P}}
\newcommand{\ppoly}{\mathit{P/poly}}
\newcommand{\NP}{\mathit{NP}}
\newcommand{\PP}{\mathit{PP}}
\newcommand{\RP}{\mathit{RP}}
\newcommand{\BPP}{\mathit{BPP}}
\newcommand{\ZPP}{\mathit{ZPP}}
\newcommand{\NPC}{\mathit{NPC}}
\newcommand{\NPI}{\mathit{NPI}}
\newcommand{\PSpace}{\mathit{PSpace}}
\newcommand{\dens}{\mathit{dens}}
\newcommand{\PH}{\mathit{PH}}
\newcommand{\IP}{{\mathcal IP}}
\newcommand{\cE}{{\mathcal E}}
\newcommand{\cNP}{{\mathcal NP}}
\newcommand{\cA}{{\mathcal A}}
\newcommand{\cB}{{\mathcal B}}
\newcommand{\cO}{{\mathcal O}}
\newcommand{\cZ}{{\mathcal Z}}
\newcommand{\cT}{{\mathcal T}}
\newcommand{\cK}{{\mathcal K}}
\newcommand{\cX}{{\mathcal X}}
\newcommand{\cY}{{\mathcal Y}}
\newcommand{\cD}{{\mathcal D}}
\newcommand{\IMQ}{\mbox{IMQ}}
\newcommand{\DSpace}{\mathit{DSpace}}
\newcommand{\NSpace}{\mathit{NSpace}}
\newcommand{\DTime}{\mathit{DTime}}
\newcommand{\NTime}{\mathit{NTime}}
\newcommand{\DTimeSpace}{\mathit{DTimeSpace}}
\newcommand{\NTimeSpace}{\mathit{NTimeSpace}}
\newcommand{\bin}{\mbox{bin}}
\newcommand{\sign}{\mbox{sign}}
\newcommand{\pred}{\mbox{pred}}
\newcommand{\partner}{\mbox{-partner}}
\newcommand{\GRP}{\mbox{GRP}}
\newcommand{\SAT}{\mbox{SAT}}
\newcommand{\KP}{\mbox{KP}}
\newcommand{\UNSAT}{\mbox{UNSAT}}
\newcommand{\MEC}{\mbox{MEC}}
\newcommand{\ziffern}{\mbox{ziffern}}
\newcommand{\rank}{\mbox{rank}}
\newcommand{\argmax}{\mathrm{argmax}}
\newcommand{\sensitivity}{\mathrm{sensitivity}}
\newcommand{\precision}{\mathrm{precision}}
\newcommand{\specificity}{\mathrm{specificity}}
\newcommand{\Gain}{\mathrm{Gain}}
\newcommand{\st}{\mathrm{s.t.}}
\newcommand{\diag}{\mathrm{diag}}
\newcommand{\TF}{\mathrm{TF}}
\newcommand{\DKW}{\mathrm{DKW}}
\newcommand{\DF}{\mathrm{DF}}
\newcommand{\LIN}{\mathrm{LIN}}
\newcommand{\WL}{\mathrm{WL}}
\newcommand{\DS}{\mathrm{DS}}
\newcommand{\SIG}{\mathrm{SIG}}
\newcommand{\HS}{\mathrm{HS}}
\newcommand{\MinDisE}{\mathrm{MinDisE}}
\newcommand{\id}{\mathrm{id}}
\newcommand{\argmin}{\mathrm{argmin}}
\newcommand{\ggT}{\mathrm{ggT}}
\newcommand{\einsDL}{1-\mathrm{DL}}
\newcommand{\dt}{\mathrm{-DT}}
\newcommand{\ID}{1-\mathrm{ID}}
\newcommand{\einsCNF}{1-\mathrm{CNF}}
\newcommand{\kDL}{k\mathrm{-DL}}
\newcommand{\kCNF}{k\mathrm{-CNF}}
\newcommand{\kclauseCNF}{k\mathrm{-clause-CNF}}
\newcommand{\kDNF}{k\mathrm{-DNF}}
\newcommand{\ktermDNF}{k\mathrm{-term-DNF}}
\newcommand{\found}{\mbox{\tt found}}
\newcommand{\fml}{\mbox{fml}}
\newcommand{\val}{\mbox{val}}
\newcommand{\TRUE}{{\mbox{TRUE}}}
\newcommand{\cex}{{\mbox{cex}}}
\newcommand{\BVar}{{\mbox{BVar}}}
\newcommand{\BFml}{{\mbox{BFml}}}
\newcommand{\F}{{\mbox{F}}}
\newcommand{\T}{{\mbox{T}}}
\newcommand{\Tm}{{\mbox{Tm}}}
\newcommand{\Fml}{{\mbox{Fml}}}
\newcommand{\HK}{{\mbox{HK}}}
\newcommand{\QR}{{\mbox{QR}}}
\newcommand{\SQUARE}{{\mbox{SQUARE}}}
\newcommand{\Aus}{{\mbox{Aus}}}
\newcommand{\Fr}{{\mbox{Fr}}}
\newcommand{\FALSE}{{\mbox{FALSE}}}
\newcommand{\GE}{\mbox{GE}}
\newcommand{\LE}{\mbox{LE}}
\newcommand{\sm}{\setminus}
\newcommand{\bom}{\bar{\omega}}
\newcommand{\seq}{\subseteq}
\newcommand{\supeq}{\supseteq}
\newcommand{\sqseq}{\sqsubseteq}
\newcommand{\sqsupeq}{\sqsupseteq}
\newcommand{\lepol}{\le_{pol}}
\newcommand{\llepol}{\le_{pol}^{L}}
\newcommand{\lelog}{\le_{log}}
\newcommand{\lepolcook}{\le_T}
\newcommand{\lepolkarp}{\le_{pol}}
\newcommand{\lesspolkarp}{<_{pol}}
\newcommand{\lepollevin}{\le_L}
\newcommand{\expol}{(\exists)_{pol}}
\newcommand{\allpol}{(\forall)_{pol}}
\newcommand{\Qpol}{(Q_k)_{pol}}
\newcommand{\negQpol}{(\bar Q_k)_{pol}}
\newcommand{\exallpol}{\expol\allpol\expol\cdots\Qpol}
\newcommand{\allexpol}{\allpol\expol\allpol\cdots\negQpol}
\newcommand{\OR}{\vee}
\newcommand{\AND}{\wedge}
\newcommand{\lra}{\leftrightarrow}
\newcommand{\Lra}{\Longrightarrow}
\newcommand{\ra}{\rightarrow}
\newcommand{\lora}{\longrightarrow}
\newcommand{\la}{\leftarrow}
\newcommand{\La}{\Leftarrow}
\newcommand{\Ra}{\Rightarrow}
\newcommand{\error}{\mathit{error}}
\newcommand{\err}{\mathit{err}}
\newcommand{\Err}{\mathit{Err}}
\newcommand{\DT}{\mbox{--DT}}
\newcommand{\LOW}{\mbox{LOW}}
\newcommand{\cost}{\mbox{cost}}
\newcommand{\pot}{\mbox{pot}}
\newcommand{\poly}{\mbox{poly}}
\newcommand{\pol}{\mbox{pol}}
\newcommand{\HIGH}{\mbox{HIGH}}
\newcommand{\OPT}{\mbox{OPT}}
\newcommand{\MLP}{\mbox{--MLP}}
\newcommand{\Om}{\Omega}
\newcommand{\om}{\omega}
\newcommand{\dund}{\Leftrightarrow}
\newcommand{\beq}{\begin{equation}}
\newcommand{\eeq}{\end{equation}}
\newcommand{\eset}{\emptyset}
\newcommand{\ol}{\overline}
\newcommand{\bool}{\{0,1\}}
\newcommand{\eps}{\epsilon}
\newcommand{\ve}{\varepsilon}
\newcommand{\true}{\mbox{true}}
\newcommand{\nil}{\mbox{\bf nil}}
\newcommand{\impl}{\Rightarrow}
\newcommand{\Exp}{\mathbbm{E}}
\newcommand{\Ind}{\mathbbm{1}}
\newcommand{\nats}{\mathbbm{N}}
\newcommand{\ints}{\mathbbm{Z}}
\newcommand{\cN}{{\mathcal N}}
\newcommand{\cG}{{\mathcal G}}
\newcommand{\cC}{{\mathcal C}}
\newcommand{\cH}{{\mathcal H}}
\newcommand{\cV}{{\mathcal V}}
\newcommand{\cW}{{\mathcal W}}
\newcommand{\cF}{{\mathcal F}}
\newcommand{\cM}{{\mathcal M}}
\newcommand{\cS}{{\mathcal S}}
\newcommand{\cI}{{\mathcal I}}
\newcommand{\cQ}{{\mathcal Q}}
\newcommand{\cR}{{\mathcal R}}
\newcommand{\TM}{\subseteq}
\newcommand{\false}{\mbox{false}}
\newcommand{\good}{\mbox{good}}
\newcommand{\bad}{\mbox{bad}}
\newcommand{\bit}{\mbox{bit}}
\newcommand{\lsb}{\mbox{lsb}}
\newcommand{\falls}{\mbox{ if}}
\newcommand{\und}{\mbox{and}}
\newcommand{\oder}{\mbox{oder}}
\newcommand{\pr}{\mbox{pr}}
\newcommand{\opt}{\mbox{opt}}
\newcommand{\Chow}{\mbox{Chow}}
\newcommand{\VCdim}{\mbox{VCdim}}
\newcommand{\VCD}{\mathrm{VCdim}}
\newcommand{\cdim}{\mbox{Cdim}}
\newcommand{\scdim}{\mbox{SCdim}}
\newcommand{\hdim}{\mbox{Hdim}}
\newcommand{\shdim}{\mbox{SHdim}}
\newcommand{\HC}{\mbox{HC}}
\newcommand{\RLC}{\mbox{RLC}}
\newcommand{\DLC}{\mbox{DLC}}
\newcommand{\BLE}{\mbox{BLE}}
\newcommand{\ADV}{\mbox{ADV}}
\newcommand{\QBF}{\mbox{QBF}}
\newcommand{\HEX}{\mbox{HEX}}
\newcommand{\qbf}{\mbox{qbf}}
\newcommand{\otherwise}{\mbox{otherwise}}
\newcommand{\supp}{\mbox{supp}}
\newcommand{\out}{\mbox{out}}
\newcommand{\er}[1]{\stackrel{#1}{\equiv}}
\newcommand{\transition}[1]{\stackrel{#1}{\longrightarrow}}
\newcommand{\righttransit}[1]{\stackrel{\longrightarrow}{#1}}
\newcommand{\lefttransit}[1]{\stackrel{\longleftarrow}{#1}}
\newcommand{\fmove}[2]{\stackrel{#1,#2}{\longrightarrow}}
\newcommand{\sprod}[2]{\langle #1 , #2 \rangle}
\newcommand{\Span}[1]{\langle #1 \rangle}
\newcommand{\cfmove}[1]{\stackrel{#1}{\longrightarrow}}
\newcommand{\abl}[1]{\stackrel{#1}{\Longrightarrow}}
\newcommand{\bmove}[1]{\stackrel{#1}{\longleftarrow}}
\newcommand{\cbmove}[2]{\stackrel{#1,#2}{\longleftarrow}}
\newcommand{\round}[3]{\fbox{$\begin{array}{c}
                         \fmove{#1}{#2} \\
                         \bmove{#3}
                       \end{array}$}}
\newcommand{\cfround}[2]{\fbox{$\begin{array}{c}
                         \cfmove{#1} \\
                         \bmove{#2}
                       \end{array}$}}
\newcommand{\poleq}{\stackrel{pol}{\sim}}
\newcommand{\poliso}{\stackrel{pol}{\simeq}}
\newcommand\reals{\mathbbm{R}}
\newcommand\rationals{\mathbbm{Q}}
\newcommand{\defeq}{:=}
\newcommand{\fac}[1]{{#1}!}
\newcommand{\co}[1]{\mbox{co-}{#1}}
\newcommand{\vexpol}[1]{(\exists{#1})_{pol}}
\newcommand{\vallpol}[1]{(\forall{#1})_{pol}}
\newcommand{\vQpol}[1]{(Q_k{#1})_{pol}}
\newcommand{\vQpolpol}[1]{(Q_{p(|x|)}{#1})_{pol}}
\newcommand{\vQpolpolq}[1]{(Q_{q(|x|)}{#1})_{pol}}
\newcommand{\vnegQpol}[1]{(\bar Q{#1})_{pol}}
\newcommand{\vexallpol}{\vexpol{y_1}\vallpol{y_2}\vexpol{y_3}\cdots\vQpol{y_k}}
\newcommand{\oddvexallpol}{\vexpol{y_1}\vallpol{y_2}\cdots\vexpol{y_k}}
\newcommand{\vexallpolp}{\vexpol{y'_1}\vallpol{y'_2}\vexpol{y'_3}\cdots\vQpol{y'_k}}
\newcommand{\vexallpoldp}{\vexpol{y''_1}\vallpol{y''_2}\vexpol{y''_3}\cdots\vQpol{y''_k}}
\newcommand{\vallexpol}{\vallpol{y_1}\vexpol{y_2}\vallpol{y_3}\cdots\vnegQpol{y_k}}
\newcommand{\vexallpolpol}{\vexpol{y_1}\vallpol{y_2}\vexpol{y_3}\cdots\vQpolpol{y_{p(|x|)}}}
\newcommand{\vexallpolpolq}{\vexpol{y_1}\vallpol{y_2}\vexpol{y_3}\cdots\vQpolpolq{y_{q(|x|)}}}

\newcommand{\sequiv}{\stackrel{s}{\equiv}}

\renewcommand{\vec}[1]{\mathbf{#1}}

\maketitle

\begin{abstract}
It is well known that the containment problem (as well as the
equivalence problem) for semilinear sets is $\log$-complete
in $\Pi_2^p$. It had been shown quite recently that already 
the containment problem for multi-dimensional linear sets is 
$\log$-complete in $\Pi_2^p$ (where hardness even holds for 
a unary encoding of the numerical input parameters). In this 
paper, we show that already the containment problem for 
$1$-dimensional linear sets (with binary encoding of the 
numerical input parameters) is $\log$-hard (and therefore 
also $\log$-complete) in $\Pi_2^p$. However, combining both 
restrictions (dimension $1$ and unary encoding), the problem 
becomes solvable in polynomial time.
\end{abstract}

\section{Introduction}

The containment problem for a family of sets consists
in finding an answer to the following question: given 
two sets of the family, is the first one a subset of 
the second one?

It had been shown in a very early stage of complexity theory
that the containment and the equivalence problem for semilinear 
sets are $\log$-complete in $\Pi_2^p$ (the second level of the
polynomial hierarchy)~\cite{H1982}. This early investigation
had been motivated by the fact that, first, the equivalence
problem for contextfree languages is recursively undecidable
and, second, the commutative images of contextfree languages
happen to be semilinear sets according to Parikh's 
theorem~\cite{P1966}. Showing inequivalence of the commutative 
images of two given contextfree languages would therefore
demonstrate their inequivalence.

Linear sets are the basic building blocks of semilinear sets.
(The latter are finite unions of linear sets.)
Moreover, $1$-dimensional linear sets are the central object 
of research in the study of numerical semigroups~\cite{RG-S2009}. 
It was shown quite recently that the containment problem 
for linear sets of variable dimension is $\log$-complete 
in $\Pi_2^p$, where hardness even holds when numbers
are encoded in unary~\cite{CHH2016}. In this paper, 
we extend the latter result as follows:
\begin{enumerate}
\item
The containment problem for $1$-dimensional linear sets
(with a binary encoding of numbers) is $\log$-hard (and 
therefore also $\log$-complete) in $\Pi_2^p$.
\item
On the other hand, the containment problem for 
$1$-dimensional linear sets with a unary encoding 
of numbers is solvable in polynomial time.
\end{enumerate}
Moreover, in order to prove these results, we show 
the following:
\begin{itemize}
\item
The containment problem for so-called simple 
unary $(+,\cup)$-expressions\footnote{a variant of a 
problem that has originally been analyzed by 
Stockmeyer~\cite{S1977}} is $\log$-hard in~$\Pi_2^p$.
\item
The containment problem for linear sets is still
$\log$-hard in $\Pi_2^p$ under a relatively strong 
promise.  See Sections~\ref{subsec:promise} 
and~\ref{sec:main-results} for details.
\end{itemize}
These results might be of independent interest.

As for semilinear sets, the containment and the inequivalence
problem have the same inherent complexity: both are
$\log$-complete in $\Pi_2^p$. We briefly note that the
situation is different for linear sets. The equivalence problem
for linear sets is easily shown to be computationally equivalent
to the word problem for linear sets, and the latter is easily
shown to be NP-complete. Hence, for linear sets, verifying
containment is much harder than verifying equivalence.

This paper is structured as follows. In Section~\ref{sec:facts}
we present the basic definitions and notations, and we mention
some facts. Our main results are stated and proved in 
Section~\ref{sec:main-results}.  One of these proofs is however 
postponed to the final Section~\ref{sec:reduction2intexp} because 
it is a suitable modification of a similar proof of Stockmeyer 
(and is given for the sake of completeness). In the final
Section~\ref{sec:open-problems}, an open problem is mentioned.

\section{Definitions, Notations and Facts} \label{sec:facts}

We assume familiarity with basic concepts from complexity theory
(e.g., logspace reductions, $\log$-hardness or $\log$-completeness,
polynomial hierarchy etc.). The complexity classes of the polynomial
hierarchy will be denoted, as usual, by $\Sigma_k^p$ and $\Pi_k^p$
for $k = 0,1,2,\ldots$. We will mainly deal with the class $\Pi_2^p$
on the second level of the hierarchy. 

In Section~\ref{subsec:qbf}, we briefly call into mind the definition
of true quantified Boolean formulas which give rise to a hierarchy of 
problems with one $\log$-complete problem at every level of the polynomial 
hierarchy. Section~\ref{subsec:integer-expressions} contains the
basic definitions that we need in connection with integer expressions.
In Section~\ref{subsec:linear-sets}, we briefly remind the reader to
the definition of linear and semilinear sets. Some well known results
on the inherent complexity of the containment problem for integer 
expressions resp.~for semilinear sets are mentioned in
Section~\ref{subsec:containment-problems}. Section~\ref{subsec:promise}
briefly calls into mind the notion of promise problems.

\subsection{Quantified Boolean Formulas} \label{subsec:qbf}

\begin{definition}[\cite{S1977}]
Let $X_1,X_2,\ldots,X_k$ with $X_i = \{x_{i1},x_{i2},\ldots\}$ 
be disjoint collections of Boolean variables. Let $f(X_1,\ldots,X_k)$
denote any Boolean formula over (finitely many of) the variables 
from $X_1 \cup\ldots\cup X_k$. Let $Q_k=\exists$ if $k\ge1$ is odd
and $Q_k=\forall$ if $k\ge1$ is even. 
The notation ``$\exists X_i:\ldots$'' means ``there exists an 
assignment of the variables in $X_i$ such that $\ldots$''. 
The analogous remark applies to the notation ``$\forall X_i:\ldots$''. 
Given these notations, we define
\[
\cB_k = \{f(X_1,\ldots,X_k):\
(\exists X_1,\forall X_2,\ldots,Q_k X_k: f(X_1,\ldots,X_k) = 1\} \enspace .
\]
The set consisting of Boolean formulas $f(X_1,\ldots,X_k)$
outside of $\cB_k$ is denoted as $\ol{\cB_k}$.
The subproblem of $\cB_k$ (resp.~of $\ol{\cB_k}$) with $f$ 
being a formula in conjunctive normal form is denoted 
as $\cB_k^{CNF}$ (resp.~as $\ol{\cB_k}^{CNF}$). The 
corresponding subproblems with $f$ being a formula in 
disjunctive normal form are denoted as $\cB_k^{DNF}$ 
and $\ol{\cB_k}^{DNF}$, respectively.
\end{definition}

\begin{theorem}[\cite{S1977}] \label{th:polynomial-hierarchy}
For any $k\ge1$, $\cB_k$ is log-complete in $\Sigma_k^p$.
The same is true for $\cB_k^{CNF}$ if $k$ is odd
and for $\cB_k^{DNF}$ if $k$ is even.
\end{theorem}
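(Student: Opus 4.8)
The plan is to prove membership and hardness separately, and to obtain the CNF/DNF refinements essentially for free by choosing the normal form in the Cook--Levin encoding according to the parity of $k$. For membership I would simply read the quantifier structure off the definition: given $f(X_1,\ldots,X_k)$, an assignment to each block $X_i$ is a string of length at most $|f|$, and once all blocks are fixed the evaluation of $f$ takes polynomial time, so $f\in\cB_k$ is decided by the prefix $\exists X_1\forall X_2\cdots Q_k X_k$ applied to a polynomial-time matrix---exactly the shape of a $\Sigma_k^p$-predicate. The subproblems $\cB_k^{CNF}$ and $\cB_k^{DNF}$ inherit membership as special cases, so only hardness requires real work.

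For hardness I would fix an arbitrary $L\in\Sigma_k^p$ and write it in prenex form,
\[
x\in L \iff \exists y_1\,\forall y_2\cdots Q_k y_k:\ M(x,y_1,\ldots,y_k)=1 \enspace ,
\]
with $M$ a polynomial-time machine and the $y_i$ ranging over binary strings of polynomial length. The aim is a logspace map $x\mapsto f$ in which the bits of each $y_i$ become the variables of the block $X_i$. The tool is the Cook--Levin tableau construction: from $M$ and its time bound one builds, in logarithmic space, a formula describing an accepting computation, leaving the input bits $(x,y_1,\ldots,y_k)$ free and introducing a block $z$ of auxiliary tableau variables.

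The crux is that $z$ must not create an extra quantifier alternation, and this is precisely what forces the CNF/DNF dichotomy. If $k$ is odd then $Q_k=\exists$, and the standard tableau formula $\phi$, which is in CNF, satisfies $M(x,y_1,\ldots,y_k)=1 \iff \exists z:\phi$; substituting and merging $z$ into the innermost existential block gives
\[
x\in L \iff \exists y_1\cdots\exists(y_k,z):\ \phi \enspace ,
\]
an instance of $\cB_k^{CNF}$. If $k$ is even then $Q_k=\forall$; here I would apply the same construction to the machine computing the complement of $M$, obtaining a CNF $\phi'$ with $M(\ldots)=0\iff\exists z:\phi'$, whence $M(\ldots)=1\iff\forall z:\neg\phi'$ with $\psi:=\neg\phi'$ in DNF. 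Merging $z$ into the innermost universal block gives
\[
x\in L\iff\exists y_1\cdots\forall(y_k,z):\ \psi \enspace ,
\]
an instance of $\cB_k^{DNF}$.

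Since each tableau clause involves only constantly many, spatially adjacent cells, the map is computable in logarithmic space in both cases; hence $\cB_k^{CNF}$ is $\log$-hard for $\Sigma_k^p$ when $k$ is odd and $\cB_k^{DNF}$ is $\log$-hard when $k$ is even. As these are subproblems of $\cB_k$, the hardness transfers to $\cB_k$, and together with membership this yields $\log$-completeness of all three problems. I expect the main obstacle to be the alternation bookkeeping: verifying that absorbing $z$ into the innermost block leaves the prefix at exactly $k$ alternations with the correct leading quantifier, which is the whole reason the innermost normal form must be CNF for odd $k$ and DNF for even $k$.
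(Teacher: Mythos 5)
The paper states this theorem without proof, citing it directly from Stockmeyer~\cite{S1977}; your argument is precisely the classical proof from that source: membership by reading the quantifier prefix off the definition, and hardness via a logspace Cook--Levin tableau reduction whose auxiliary variables $z$ are absorbed into the innermost quantifier block, with the CNF/DNF choice dictated by the parity of $k$ (complementing the matrix so that $\forall z\,\neg\phi'$ with $\neg\phi'$ in DNF handles even $k$). Your proposal is correct, including the essential bookkeeping point that this absorption leaves exactly $k$ alternations with leading $\exists$, so nothing further is needed.
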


\begin{corollary}
For any $k\ge1$, $\ol{\cB_k}$ is log-complete in $\Pi_k^p$. 
This even holds for the set $\ol{\cB_k}^{CNF}$ if $k\ge1$ 
is odd and for the set $\ol{B_k}^{DNF}$ if $k\ge1$ is even.
\end{corollary}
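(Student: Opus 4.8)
The plan is to derive this corollary directly from Theorem~\ref{th:polynomial-hierarchy} by exploiting two entirely standard facts: first, that $\Pi_k^p = \co{\Sigma_k^p}$ holds by definition of the polynomial hierarchy; and second, that a many-one logspace reduction is invariant under complementation of its source and target languages. Since $\ol{\cB_k}$ is, by definition, the complement of $\cB_k$ within the universe of all Boolean formulas, the whole task reduces to transporting Stockmeyer's $\Sigma_k^p$-completeness result across the complementation operator.

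First I would settle membership. From $\cB_k \in \Sigma_k^p$ one immediately obtains $\ol{\cB_k} \in \co{\Sigma_k^p} = \Pi_k^p$; equivalently, a formula $f$ lies in $\ol{\cB_k}$ iff $\forall X_1,\exists X_2,\ldots,\bar{Q}_k X_k:\ f(X_1,\ldots,X_k) = 0$, which is exactly a defining quantifier pattern of a $\Pi_k^p$-predicate (the alternation starting with $\forall$). For hardness, I would take an arbitrary $L \in \Pi_k^p$, so that $\ol{L} \in \Sigma_k^p$, and invoke Theorem~\ref{th:polynomial-hierarchy} to obtain a logspace-computable $g$ with $x \in \ol{L} \iff g(x) \in \cB_k$. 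Negating both sides gives $x \in L \iff g(x) \in \ol{\cB_k}$, so the very same function $g$ witnesses $L \lelog \ol{\cB_k}$. As $L$ was arbitrary, $\ol{\cB_k}$ is $\log$-complete in $\Pi_k^p$.

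For the normal-form refinement (CNF for odd $k$, DNF for even $k$) I would run the identical argument, but now starting from the stronger form of Theorem~\ref{th:polynomial-hierarchy}. The one point that needs a little care is that here the complement is taken within the set of formulas of the prescribed normal form, not within the set of all strings. This is harmless for two reasons, which I would verify explicitly. On the membership side, deciding whether an input encodes a CNF (resp.\ DNF) formula is a purely syntactic logspace check, so $\ol{\cB_k}^{CNF}$ (resp.\ $\ol{\cB_k}^{DNF}$) is still in $\Pi_k^p$. On the hardness side, the reduction $g$ furnished by Stockmeyer's theorem already outputs formulas in the required normal form; hence $g(x) \in \cB_k^{CNF} \iff g(x) \in \cB_k$ and, after negation, $g(x) \in \ol{\cB_k}^{CNF} \iff g(x) \in \ol{\cB_k}$, so that $g$ reduces $L$ to $\ol{\cB_k}^{CNF}$ (and symmetrically for the DNF case).

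I do not expect a genuine obstacle in this proof: there is no search or construction to perform, only a transfer across complementation. The only thing one must get right is the bookkeeping — that the quantifier alternation flips into a $\Pi_k^p$-pattern in the correct direction, and that the normal-form restriction is preserved because the reduction's image always consists of well-formed formulas of the prescribed shape.
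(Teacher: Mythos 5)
Your argument is correct and is precisely the standard complementation transfer that the paper implicitly relies on (it states the corollary without proof as an immediate consequence of Theorem~\ref{th:polynomial-hierarchy}): membership via $\Pi_k^p = \co{\Sigma_k^p}$, and hardness by reusing Stockmeyer's reduction $g$ unchanged, since $x \in L \iff g(x) \in \ol{\cB_k}$ follows by negating both sides. Your extra care about the complement being taken within the class of well-formed (normal-form) formulas, handled by a syntactic logspace check and by noting that $g$ always outputs formulas of the prescribed shape, is exactly the right bookkeeping.
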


\begin{example} \label{ex:qbf2}
The set $\ol{\cB_2}^{DNF}$, which coincides with the set of 
all Boolean DNF-formulas $f(X_1,X_2)$ satisfying
\[
\forall X_1, \exists X_2: f(X_1,X_2)=0 \enspace .
\]
is $\log$-complete in $\Pi_2^p$.
\end{example}

\subsection{Integer Expressions} \label{subsec:integer-expressions}

\begin{definition}
Let $m\ge1$ be a positive integer. The set $\cE_m$ of
{\em $m$-dimensional unary integer expressions}, simply
called {\em unary integer expressions} if $m$ is clear 
from context, is the smallest set with the following 
properties:
\begin{enumerate}
\item
$\{0,1\}^m \seq \cE_m$. The tuples $(b_1,\ldots,b_m)\in\{0,1\}^m$ 
are called {\em atomic expressions}.
\item
For any $E_1,E_2\in\cE_m$: $(E_1 \cup E_2) , (E_1 + E_2) \in \cE_m$.
\end{enumerate}
\end{definition}
Every expression $E\in\cE_m$ represents a set $L(E) \seq \nats_0^m$
that is defined in the obvious manner.

We briefly note that the classical definition of integer
expressions in~\cite{S1977} is different from ours: there the
expressions define subsets of $\nats_0$, and an atomic expression 
is a binary representation of a single number in $\nats_0$.
In other words, the classical definition deals with 
$1$-dimensional binary expressions whereas we deal with
multi-dimensional unary expressions.

Since ``$\cup$'' is an associative operation, we may simply
write $(E_1 \cup E_2 \cup E_3 \cup\ldots\cup E_s)$ instead
of $(\ldots((E_1 \cup E_2) \cup E_3) \cup\ldots\cup E_s)$.
The analogous remark applies to the operation~``$+$''.

\begin{definition}
An expression $E \in \cE_m$ is said to be
a {\em $(+,\cup)$-expression} if it is a sum of unions 
of atomic expressions. A $(+,\cup)$-expression is called
{\em simple} if every union in the sum is the union of
precisely two (not necessarily different) atomic expressions.
\end{definition}

\begin{example}
The string
\[ E = ((1,1,0)\cup(0,0,0)) + ((1,0,0)\cup(1,0,0)) + ((1,1,1)\cup(0,0,0)) \]
is a simple unary $(+,\cup)$-expression. It represents the set
\[ L(E) = \{(1,0,0),(2,1,0),(2,1,1),(3,2,1)\} \enspace . \]
\end{example}

\subsection{Linear and Semilinear Sets} \label{subsec:linear-sets}

\begin{definition}
The set $L(\vec{c},P) \seq \nats_0^m$ induced by $\vec{c}\in\nats_0^m$ 
and a finite set $P=\{\vec{p_1},\ldots,\vec{p_k}\}\subset\nats_0^m$ is 
defined as 
\[ 
L(\vec{c},P) = \vec{c} + \langle P \rangle\ \mbox{ where }\ 
\langle P \rangle = 
\left\{\sum_{i=1}^{k}a_i\cdot\vec{p_i}: a_i\in\nats_0\right\} 
\enspace .
\]
The elements in $P$ are called {\em periods} and $\vec{c}$ 
is called the {\em constant vector} of $L(\vec{c},P)$.
A subset $L$ of $\nats_0^m$ is called {\em linear} if
$L = L(\vec{c},P)$ for some $\vec{c}\in\nats_0^m$ and 
some finite set $P \subset \nats_0^m$. A {\em semilinear set} 
in $\nats_0^m$ is a finite union of linear sets in $\nats_0^m$.
\end{definition}

\subsection{Containment Problems} \label{subsec:containment-problems}

As mentioned already in the introduction, the {\em containment problem} 
for a family of sets consists in finding an answer to the following 
question: given two sets of the family, 
is the first one a subset of the second one? We will be mainly
concerned with the containment problem for integer expressions and with 
the containment problem for linear and semilinear sets. 
We will assume that the dimension $m$ of sets in $\nats_0^m$ is part
of the input unless we explicitly talk about an $m$-dimensional problem 
for some fixed constant $m$. The following is known:
\begin{enumerate}
\item
The containment problem for $1$-dimensional binary integer
expressions is $\log$-complete in $\Pi_2^p$~\cite{S1977}.
\item
The containment problem for semilinear sets is $\log$-complete
in $\Pi_2^p$~\cite{H1982}. The $\log$-hardness in $\Pi_2^p$
even holds either when numbers are encoded in unary or when 
the dimension is fixed to $1$. 
\item
The containment problem for linear sets is $\log$-complete
in $\Pi_2^p$~\cite{CHH2016}. The $\log$-hardness in $\Pi_2^p$
even holds when numbers are encoded in unary.
\end{enumerate}
The first two hardness results are shown by means of a logspace 
reduction from $\ol{\cB_2}^{DNF}$ to the respective containment 
problem. A suitable modification of Stockmeyer's reduction 
from $\ol{\cB_2}^{DNF}$ to the containment problem for
$1$-di\-men\-sio\-nal binary integer expressions leads to the
following result:

\begin{theorem} \label{th:reduction2intexp}
The containment problem for simple unary 
$(+,\cup)$-expressions is $\log$-complete in $\Pi_2^p$.
\end{theorem}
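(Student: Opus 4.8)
The plan is to treat membership in $\Pi_2^p$ and $\log$-hardness separately, the latter being the substantial part. For membership, note that a simple unary $(+,\cup)$-expression $E=\sum_{i=1}^{s}(a_i\cup b_i)$ represents the finite set $L(E)=\{\sum_{i=1}^{s}c_i:\ c_i\in\{a_i,b_i\}\}$, so every element is named by a choice vector in $\{0,1\}^{s}$. Writing $v(\sigma)$ for the element of $L(E_1)$ selected by $\sigma\in\{0,1\}^{s_1}$ and $w(\tau)$ for the element of $L(E_2)$ selected by $\tau\in\{0,1\}^{s_2}$, containment is equivalent to
\[
\forall\sigma\in\{0,1\}^{s_1}\ \exists\tau\in\{0,1\}^{s_2}:\ v(\sigma)=w(\tau)\enspace .
\]
All entries of $v(\sigma)$ and $w(\tau)$ are bounded by $\max(s_1,s_2)$, hence polynomially bounded, so the inner equality is decidable in polynomial time; this $\forall\exists$ form witnesses membership in $\Pi_2^p$.

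For $\log$-hardness I would give a logspace reduction from $\ol{\cB_2}^{DNF}$, which is $\log$-complete in $\Pi_2^p$ by Example~\ref{ex:qbf2}, in the spirit of Stockmeyer's reduction but recast in the unary, multi-dimensional setting. Write the input DNF as $f(X_1,X_2)=T_1\vee\cdots\vee T_r$ with $X_1=\{x_1,\ldots,x_p\}$ and $X_2=\{x_{p+1},\ldots,x_n\}$, and recall that $f\in\ol{\cB_2}^{DNF}$ iff for every assignment $\alpha$ to $X_1$ there is an assignment $\beta$ to $X_2$ with $f(\alpha,\beta)=0$, i.e.\ falsifying every term. I use $m=p+r$ coordinates: $p$ \emph{value} coordinates recording the $X_1$-part of an assignment, and one \emph{term} coordinate $t_j$ per $T_j$, intended to count the literals of $T_j$ that are made false. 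Let $d_j\ge1$ denote the number of literals of $T_j$ (we may assume no term is empty).

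The expression $E_2$ consists of one \emph{choice gadget} $(a_k\cup b_k)$ per variable $x_k$ together with \emph{slack gadgets}. In $a_k$ (reading ``$x_k=1$'') I place a $1$ in every term coordinate $t_j$ with $\neg x_k\in T_j$, and, if $k\le p$, also a $1$ in value coordinate $k$; in $b_k$ (reading ``$x_k=0$'') I place a $1$ in every $t_j$ with $x_k\in T_j$. For each term I add $d_j-1$ slack gadgets $(\vec{0}\cup\vec{e}_{t_j})$, where $\vec{e}_{t_j}$ is the unit vector on $t_j$. The expression $E_1$ fixes the \emph{target}: the sum $\sum_{k=1}^{p}(\vec{0}\cup\vec{e}_k)$ lets the value block range over all $\alpha\in\{0,1\}^p$, while $d_j$ forced copies of $(\vec{e}_{t_j}\cup\vec{e}_{t_j})$ set each term coordinate to the constant $d_j$. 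Both expressions are simple and logspace-computable. For correctness, fix $\alpha$ and consider the corresponding target of $L(E_1)$. Since value coordinate $k$ is touched only by the gadget of $x_k$, matching the value block forces the $X_1$-choices to equal $\alpha$, leaving the $X_2$-choices (an assignment $\beta$) and the slacks free. Then term coordinate $t_j$ equals the number of literals of $T_j$ falsified by $(\alpha,\beta)$ plus a slack in $\{0,\ldots,d_j-1\}$, which can reach the target $d_j$ exactly when that number is at least $1$, i.e.\ when $T_j$ is falsified. Hence the whole target lies in $L(E_2)$ iff a single $\beta$ falsifies every term, and ranging over $\alpha$ yields $L(E_1)\seq L(E_2)\iff\forall\alpha\,\exists\beta:\ f(\alpha,\beta)=0$.

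The step deserving the most care, and the one I expect to be the main obstacle, is the slack mechanism. The semantics of $(+,\cup)$-expressions is purely additive, so there is no way to \emph{reject} an unwanted choice; one can only design coordinate targets that are hit exactly by the intended configurations. Turning the ``at least one false literal per term'' conditions into the \emph{exact} target values $d_j$, by means of per-term slack, while simultaneously using the value coordinates to pin $\alpha$ down and thereby realize the universal quantifier, is the delicate bookkeeping. One must also verify that no unintended choice in $E_2$ reproduces a target vector, which holds precisely because the value coordinates rigidly determine the $X_1$-choices.
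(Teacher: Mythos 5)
Your proof is correct, but your hardness gadget is genuinely different from the paper's. Both arguments reduce from $\ol{\cB_2}^{DNF}$ and both rest on the same core trick---since the $(+,\cup)$-semantics is purely additive one cannot reject unwanted choices, so one engineers slack so that a target coordinate is hit exactly iff the intended event occurs---but the encodings diverge. The paper works in dimension $m$ (one coordinate per monomial, no value coordinates): the universal player's choices live inside $E_1$ itself, whose gadgets $(\ol{\vec{b_{1i}}}\cup\ol{\vec{b'_{1i}}})$ are unions of \emph{non-occurrence} vectors, so that a target $\vec{y}\in\{n,\ldots,2n\}^m$ has $\vec{y}[j]=2n$ exactly when the chosen $X_1$-assignment fails to annul $M_j$; $E_2$ then combines occurrence vectors for $X_2$ with a uniform slack of $2n-1$ units in every coordinate, and the correspondence between targets and assignments is only a many-to-one ``type'' relation, which is why the paper needs its Claims 1--3 to transfer the quantifiers. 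You instead add $p$ dedicated value coordinates, place \emph{both} blocks of variable choices inside $E_2$, and let $L(E_1)$ be the rigid product $\{0,1\}^p\times\{(d_1,\ldots,d_r)\}$: the value block forces the $X_1$-choices in $E_2$ to mirror $\alpha$ bijectively, and the per-term slack of $d_j-1$ units against the exact target $d_j$ implements ``at least one falsified literal''. Your version buys a cleaner, bijective correctness argument with no type machinery, at the cost of a larger dimension ($p+r$ versus $m$), and it counts falsified literals where the paper counts surviving ones; the membership halves are essentially identical (a $\forall\exists$ statement over choice vectors with polynomially bounded entries, for which the paper cites Wrathall's characterization of $\Pi_2^p$).
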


The proof of this theorem will be given in 
Section~\ref{sec:reduction2intexp}. 

\begin{description}
\item[Notation for Vectors:]
The $j$-th component of a vector $\vec{x}$ is denoted as $x_j$
or, occasionally, as $\vec{x}[j]$. The latter notation is used,
for instance, if there is a sequence of vectors, 
say $\vec{x_1},\ldots,\vec{x_n}$. The $j$-th component of $\vec{x_i}$
is then denoted as $\vec{x_i}[j]$ (as opposed to $x_{i,j}$
or $(x_i)_j$). Throughout the paper, we use $\vec{a^m}$ 
(with $a\in\nats_0$) as a short notation for $(a,\ldots,a) \in \nats_0^m$. 
For instance $\vec{1^m}$ denotes the all-ones vector in $\nats_0^m$. 
The vector with value $1$ in the $i$-th component and zeros 
in the remaining $m-1$ components is denoted as $\vec{e_i^m}$.
The all-ones matrix is denoted as $J$.\footnote{The order of
the matrix will always be clear from context.}
\end{description}

\subsection{Promise Problems} \label{subsec:promise}

A decision problem (without promise) is a problem
with ``yes''- and ``no''-instances. A {\em promise problem}
is a decision problem augmented by a promise that the
input instances passed to an algorithm satisfy a certain
condition. An algorithm needs to solve the promise problem
only on the input instances that satisfy this condition.
It may output anything on the remaining instances.
Hence a promise problem has besides the ``yes''-
and the ``no''-instances a third kind of instances: 
the ones that violate the promised condition.
Decision problem can be viewed as promise problems
with an empty promise. Reductions between promise 
problems should map ``yes''-instances 
(resp.~``no''-instances) of the first problem to 
``yes''-instances (resp.~``no''-instances) of the 
second problem.

\section{Main Results} \label{sec:main-results}

The first result in this section will be concerned with 
the containment problem for linear sets when the latter
is viewed as the following promise problem.
\begin{description}
\item[Instance:] 
dimension $m$, finite sets $P,Q \subset \nats_0^m$, 
vectors $\vec{c},\vec{d}\in\nats_0^m$ and $s\in\{1,\ldots,|P|\}$.
\item[Question:] $L(\vec{c},P) \seq L(\vec{d},Q)$?
\item[Promise:] 
Let $P = \{\vec{p_1},\ldots,\vec{p_k}\}$ and
let $K_s =\left\{\vec{a} \in\{0,1\}^k\left|\ \sum_{i=1}^{k}a_i = s\right.\right\}$. 
With this notation, the following holds:
\begin{equation} \label{eq:promise}
\forall x\in\nats_0^k \sm K_s:
\sum_{i=1}^{k}x_i\cdot\vec{p_i} \in L(\vec{d},Q) \enspace .
\end{equation}
In other words: we make the promise that the 
inclusion $L(\vec{c},P) \seq L(\vec{d},Q)$ can possibly
fail only on linear combinations of $\vec{p_1},\ldots,\vec{p_k}$
with coefficient vectors taken from $K_s$. 
\end{description} 
It is easy to see that we may may set $\vec{d}=\vec{0}$
in this problem without any loss of generality.

In~\cite{CHH2016}, it was shown that the containment problem
for linear sets is $\log$-hard in $\Pi_2^p$. We strengthen
this result by showing that even the corresponding promise
problem exhibits this kind of hardness. This slightly
stronger result will later help us to prove the hardness
of the containment problem for $1$-dimensional linear sets.

\begin{theorem} \label{th:containment-linear}
The containment problem for linear sets is $\log$-hard in $\Pi_2^p$ 
even under the promise~(\ref{eq:promise}) and even when numbers
are encoded in unary. 
\end{theorem}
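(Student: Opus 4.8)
The plan is to reduce from the containment problem for simple unary $(+,\cup)$-expressions, which is $\log$-hard in $\Pi_2^p$ by Theorem~\ref{th:reduction2intexp}, to the promise version of the containment problem for linear sets. Suppose we are given two simple unary $(+,\cup)$-expressions $E$ and $F$ in $\cE_m$ and must decide whether $L(E) \seq L(F)$. Write $E = \bigcup$-free as a sum of $k$ two-element unions, say $E = \sum_{i=1}^{k}(\vec{u_i}\cup\vec{v_i})$ with $\vec{u_i},\vec{v_i}\in\{0,1\}^m$. The key observation is that each choice ``take $\vec{u_i}$ or take $\vec{v_i}$'' in the sum corresponds to a binary decision, so $L(E)$ is exactly the set of $2^k$ vectors obtained by selecting one atom from each union and adding them up. I would encode this choice structure as a single linear set $L(\vec{c},P)$ in a slightly larger dimension, where $P$ has one period per union and the promise~(\ref{eq:promise}) is used to restrict attention precisely to the coefficient vectors in $K_s$ that realize these selections.

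First I would augment the dimension to $m' = m + k$, appending $k$ auxiliary ``selector'' coordinates. For each union $\vec{u_i}\cup\vec{v_i}$ I would introduce two periods, $\vec{p_i}=(\vec{u_i},\vec{e_i^k})$ and $\vec{p_{i}'}=(\vec{v_i},\vec{e_i^k})$, so that the auxiliary block records which atoms were used and how often. With $P$ consisting of these $2k$ periods, taking the subset $K_s$ with $s = k$ forces exactly $k$ of the coefficients to be $1$ and the rest to be $0$; I would further arrange, via the auxiliary coordinates, that a coefficient vector in $K_k$ contributes to $L(\vec{d},Q)$-membership unless it selects exactly one of $\vec{p_i},\vec{p_i'}$ for each $i$. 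The promise is then the statement that every coefficient vector outside this ``one-per-union'' pattern already lies in the target set $L(\vec{d},Q)$, so the only combinations that can witness a containment failure are precisely the legitimate selections, which are in bijection with the summands of $L(E)$. The target side $L(\vec{d},Q)$ would be built to represent $L(F)$ together with all the ``junk'' vectors that the promise declares harmless.

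The main obstacle I expect is engineering the target set $L(\vec{d},Q)$ so that it simultaneously (i) contains $L(F)$ projected into the enlarged dimension, (ii) contains every vector arising from a coefficient vector outside $K_k$ or from a disallowed selection pattern inside $K_k$, so that the promise genuinely holds, and (iii) does \emph{not} spuriously contain the legitimate selection vectors unless the corresponding element of $L(E)$ really lies in $L(F)$. Achieving (ii) and (iii) together is delicate because the periods of $Q$ generate a semigroup, and one must ensure that the ``harmless'' vectors can be covered by $\langle Q\rangle$ without accidentally covering a genuine counterexample; the auxiliary selector coordinates are the tool for separating these cases, since a legitimate selection produces the signature $\vec{1^k}$ in the auxiliary block while over-counting or wrong-count patterns produce distinguishable signatures that $Q$ can absorb. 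Once the gadget is set up, correctness follows by checking that $L(\vec{c},P)\seq L(\vec{d},Q)$ holds if and only if $L(E)\seq L(F)$, and that the whole construction is computable in logspace with all numbers bounded polynomially (hence expressible in unary), giving the claimed hardness under the promise.
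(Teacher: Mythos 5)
Your overall strategy is the same as the paper's: reduce from the containment problem for simple unary $(+,\cup)$-expressions (Theorem~\ref{th:reduction2intexp}), append selector coordinates so that each period pair records which atom of each union was chosen, and invoke the promise~(\ref{eq:promise}) with $s$ equal to the number of unions. Indeed, your auxiliary block is exactly the paper's ``first control section'' (the paper offsets it by the constant contribution $\vec{2^s}$, so that essential vectors carry the signature $\vec{3^s}$ rather than your $\vec{1^k}$). The problem is that you stop precisely where the proof begins: the entire substance of the argument is the explicit construction of the target set and the verification that it has properties (i)--(iii), and you yourself flag this as ``the main obstacle'' without resolving it. A reduction whose correctness hinges on an unconstructed gadget is not a proof, and in this case the gadget is genuinely nontrivial: the paper needs $2s+m$ extra coordinates organized into \emph{three} control sections with distinct roles, not the single $k$-coordinate block you propose.

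Concretely, two mechanisms are missing. First, the promise is not an assumption you get for free; your reduction must output an instance that \emph{satisfies} it, which means the target must provably absorb every combination $\sum_i x_i\vec{p_i}$ with $x\notin K_s$ --- including ones with arbitrarily large coefficients and hence unbounded base sections. The paper handles this with its Claim~1 machinery: a bound $n$ exceeding every component of $L(E)$, periods $(n\cdot\vec{e_i^m},\vec{0^s},\vec{0^s},\vec{0^m})$ and marked residue periods $(r\cdot\vec{e_i^m},\vec{0^s},\vec{0^s},\vec{e_i^m})$, with the third control section and the constant's $\vec{1^m}$ ensuring these absorbers can be used exactly once per base coordinate. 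Your sketch has no analogue. Second, you need rigidity in the other direction (the paper's Claim~2): the junk-absorbing periods of $Q$ must not combine to cover a legitimate selection vector whose base section lies outside $L(F)$. The paper gets this from the gap argument that every period adds $0$, $2$, or $3$ to a first-control coordinate (so the value $2$ can never be extended to $3$), from the second control section forcing exactly one $P'$-period per pair, and from the bound $n$ preventing base-section overfill. With your signature $\vec{1^k}$ and no constant offset, nothing prevents sums of absorber periods from reproducing that signature with a bad base section, so property (iii) can silently fail. Until you exhibit $Q$ and prove both absorption and rigidity, the reduction is a plan, not an argument.
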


\begin{proof}
We will describe a logspace reduction from the containment
problem for simple unary $(+,\cup)$-expressions to the 
containment problem for linear sets. An instance of the 
former problem is of the form
\begin{equation} \label{eq:two-simple-expressions}
E = \sum_{i=1}^{s}(\vec{B_{i1}} \cup \vec{B_{i2}})\ \mbox{ and }\
E' = \sum_{i=1}^{s'}(\vec{B'_{i1}} \cup \vec{B'_{i2}})
\end{equation}
where $\vec{B_{i1}},\vec{B_{i2}},\vec{B'_{i1}},\vec{B'_{i2}} \in \{0,1\}^m$. 
Note that we may set $s'=s$ because we could add sum-terms 
of the form $(\vec{0^m}\cup\vec{0^m})$ to the expression which 
has fewer terms. Our goal is to design $(2m+2s)$-dimensional linear 
sets $\vec{c} + \langle P \rangle$ and $\langle P' \cup P'' \rangle$ 
such that 
\begin{equation} \label{eq:reduction2linear} 
L(E) \seq L(E') \dund 
\vec{c} + \langle P \rangle \seq \langle P' \cup P'' \rangle
\enspace.
\end{equation}
Intuitively, we should think of vectors from $\nats_0^{2m+2s}$
as being decomposed into four sections of dimension $m,s,s,m$,
respectively. The first section is called the ``base section'';
the latter three are called ``control sections''. The constant 
vector $\vec{c}$ and the periods 
in $P = \{\vec{p_{ij}}: i\in[s],j\in[2]\}$ are chosen as follows:
\begin{equation} \label{eq:linset1}
\vec{c} = (\vec{0^m},\vec{2^s},\vec{1^s},\vec{1^m})\ \mbox{ and }\ 
\vec{p_{ij}} = (\vec{B_{ij}},\vec{e_i^s},\vec{0^s},\vec{0^m}) \enspace .
\end{equation}
Note that the base section of the periods in $P$ contains 
the atomic sub-expressions of $E$. The vectors 
in $\nats_0^{2m+2s}$ having $(\vec{3^s},\vec{1^s},\vec{1^m})$ 
in their control sections are said to be ``essential''. It is 
evident that
\[ 
L(E)\times\{3\}^s\times\{1\}^s\times\{1\}^m = 
(\vec{c}+\langle P \rangle) \cap (\nats_0^m\times\{3\}^s\times\{1\}^s\times\{1\}^m)
\enspace .
\]
In other words: the set of base sections of the essential vectors
in $\vec{c} + \langle P \rangle$ coincides with $L(E)$. The periods
in $P' = \{\vec{p'_{ij}}: i\in[s],j\in[2]\}$ are similarly defined
as the periods in $P$:
\[ 
\vec{p'_{ij}} = \left\{ \begin{array}{ll}
             (\vec{B'_{ij}},3 \cdot \vec{e_i^s},\vec{e_i^s},\vec{0^m}) & \mbox{if $i\in[s-1]$} \\
             (\vec{B'_{sj}},3 \cdot \vec{e_s^s},\vec{e_i^s},\vec{1^m}) & \mbox{if $i=s$}
           \end{array} \right .  \enspace .
\]
Clearly,
\[
L(E')\times\{3\}^s\times\{1\}^s\times\{1\}^m =
\langle P' \rangle \cap (\nats_0^m\times\{3\}^s\times\{1\}^s\times\{1\}^m)
\enspace .
\]
Note that $L(E) \seq L(E')$ iff any essential vector 
in $\vec{c} + \langle P \rangle$ is contained 
in $\langle P' \rangle$. In order to get the desired
equivalence~(\ref{eq:reduction2linear}), we will design $P''$
such that the following holds:
\begin{description}
\item[Claim 1:]
Any inessential vector from $\vec{c} + \langle P \rangle$ is
contained in $\langle P'' \rangle$.
\item[Claim 2:]
Any essential vector in $\vec{c} + \langle P \rangle$ is contained
in $\langle P' \cup P'' \rangle$ only if it is already
contained in $\langle P' \rangle$.
\end{description}
It is evident that~(\ref{eq:reduction2linear}) is valid
if $P''$ can be defined in accordance with the two above 
claims. Let $n = 1+\max\{x_i: \vec{x} \in L(E),i\in[m]\}$, i.e.,
$n-1$ is the largest number that occurs in a component of 
some vector in $L(E)$. We now set $P'' = P''_1 \cup P''_2$
where
\begin{eqnarray*}
P''_1 & = & 
\{(\vec{0^m},2 \cdot \vec{e_i^s},\vec{1^s},\vec{0^m}) , (\vec{0^m},2 \cdot \vec{e_i^s},\vec{0^s},\vec{0^m}) ,
(\vec{0^m},3 \cdot \vec{e_i^s},\vec{0^s},\vec{0^m}): i\in[s]\} \enspace , \\
P''_2 & = & 
\{(r \cdot \vec{e_i^m},\vec{0^s},\vec{0^s},\vec{e_i^m}) , (n \cdot \vec{e_i^m},\vec{0^s},\vec{0^s},\vec{0^m}):
i\in[m] , r\in\{0,1,\ldots,n-1\}\} \enspace .
\end{eqnarray*}
The proof of the theorem can now be accomplished by showing
that the above two claims are valid for our definition of $P''$
(and by adding some easy observations).
\begin{description}
\item[Proof of Claim 1:]
Let $\vec{x} \in \vec{c} + \langle P \rangle$ be inessential.
An inspection of~(\ref{eq:linset1}) reveals that there must
exist an index $i_0 \in [s]$ such that the $i_0$-th component
of the first control section of $\vec{x}$ has a value that differs
from $3$. Since already the constant vector $\vec{c}$ makes a contribution
of $2$ in this control section, the possible values for $x_{m+i_0}$
are $2,4,5,6,\ldots$. In order to cast $\vec{x}$ as a member 
of $\langle P'' \rangle$, we first pick the 
vector $\vec{u} = (\vec{0^m}, 2 \cdot \vec{e_{i_0}^s},\vec{1^s},\vec{0^m})$. 
Note that $\vec{u}\le\vec{x}$ and $\vec{u}$ already coincides with $\vec{x}$ 
in the second control section. Adding to $\vec{u}$ properly chosen multiples 
of vectors of the form $(\vec{0^m},2 \cdot \vec{e_i^s},\vec{0^s},\vec{0^m})$
or $(\vec{0^m},3 \cdot \vec{e_i^s},\vec{0^s},\vec{0^m})$, we obtain a 
vector $\vec{v} \le \vec{x}$ that coincides with $\vec{x}$ also in the first 
control section. Consider now the entries of $\vec{v}$ and $\vec{x}$ in the 
base section. For any $i\in[m]$, consider the 
decomposition $x_i-v_i = q_i n + r_i$ with $q_i\ge0$
and $0 \le r_i \le n-1$. Adding to $v$ the vector
\[ 
\sum_{i=1}^{m}\left(q_i \cdot (n \cdot \vec{e_i^m},\vec{0^s},\vec{0^s},\vec{0^m}) + 
(r_i \cdot \vec{e_i^m},\vec{0^s},\vec{0^s},\vec{e_i^m})\right) \enspace ,
\]
we obtain a vector that coincides with $\vec{x}$ (since, by now, 
it also coincides with $\vec{x}$ in the base section and in the 
third control section). 
\item[Proof of Claim 2:]
Let $\vec{x} \in \vec{c} + \langle P \rangle$ be essential and suppose
that $\vec{x} \in \langle P' \cup P'' \rangle$. A representation
of $\vec{x}$ as a member of $\langle P' \cup P'' \rangle$ cannot 
make use of a vector of the 
form $(\vec{0^m},2 \cdot \vec{e_i^s},\vec{1^s},\vec{0^m})$ 
because there is no way to extend the value $2$ in the $i$-th 
component of the first control section to $3$ (since any period 
in $P' \cup P''$ adds either $0$ or a value greater than $1$ 
to this component). Given that we do not employ these vectors, 
it follows that any representation of $\vec{x}$ as a member 
of $\langle P' \cup P'' \rangle$ must be of the 
form $\vec{x} = \vec{x'} + \vec{x''}$ for some essential 
vector $\vec{x'} \in \langle P' \rangle$ and some 
vector $\vec{x''} \in \langle P' \cup P'' \rangle$ (because,
without employing an essential vector from $\langle P' \rangle$,
we wouldn't get $\vec{1^s}$ into the second control section).
Since $\vec{x'}$ is essential, it will already 
contribute $(\vec{3^s},\vec{1^s},\vec{1^m})$ to the three control sections. 
It follows that $\vec{x''}=\vec{0^{2m+2s}}$ because adding any period 
from $P' \cup P''$ to $\vec{x'}$ will destroy the 
pattern $(\vec{3^s},\vec{1^s},\vec{1^m})$ in the control sections 
or will induce a component of value at least $n$ in the base section 
(which is larger than any component of $\vec{x}$ in the base section).
It follows that $\vec{x}=\vec{x'} \in \langle P' \rangle$.
\end{description}
It can be shown by standard arguments that the 
transformation $(E,E') \mapsto (\vec{c},P,P',P'')$
is logspace-computable (even when numbers are encoded
in unary). Finally observe that the above definition 
of essential vectors implies that every essential 
vector from $\vec{c}+\langle P \rangle$ employs
a coefficient vector from $\{0,1\}^{|P|}$ with
precisely $s$ ones. Since any inessential vector
from $\vec{c}+\langle P \rangle$ also belongs 
to $\langle P' \rangle \seq \langle P' \cup P'' \rangle$, 
the promised condition~(\ref{eq:promise}) is satisfied
(with $P' \cup P''$ at the place of $Q$).
This concludes the proof.
\end{proof}

We will show in the sequel that the containment problem
for $1$-dimensional linear sets (with numerical input
parameters given in binary representation) is $\log$-hard 
in $\Pi_2^p$. To this end, we will make use of the following
result on the aggregation of diophantine equations:

\begin{lemma}[\cite{GW1972}] \label{lem:aggregation}
Let 
\begin{equation} \label{eq:2equations} 
\sum_{j=1}^{r}a_{1j}x_j = b_1\ \mbox{ and }\ 
\sum_{j=1}^{r}a_{2j}x_j = b_2 
\end{equation}
be a system of two linear diophantine equations where 
$a_{1j},a_{2j}$ are non-negative integers and $b_1,b_2$ 
are strictly positive integers. Let $t_1,t_2$
be positive integers satisfying the following conditions:
\begin{enumerate}
\item
$t_1$ and $t_2$ are relatively prime.
\item
$t_1$ does not divide $b_2$ and $t_2$ does not divide $b_1$.
\item
$t_1 > b_2-a_2$ and $t_2 > b_1-a_1$ where $a_i$ denotes the
smallest nonzero coefficient in $\{a_{i1},\ldots,a_{ir}\}$.
\end{enumerate}
Then, restricting $x_j$ to non-negative integers, the solution
set of~(\ref{eq:2equations}) is the same as the solution set of
\[ 
t_1\cdot\sum_{j=1}^{r}a_{1j}x_j + t_2\cdot\sum_{j=1}^{r}a_{2j}x_j
= t_1 \cdot b_1 + t_2 \cdot b_2 \enspace .
\]
\end{lemma}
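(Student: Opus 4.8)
The plan is to prove the two inclusions between the solution sets separately. The direction that every solution of the system~(\ref{eq:2equations}) also solves the aggregated equation is immediate: if $\sum_j a_{1j}x_j = b_1$ and $\sum_j a_{2j}x_j = b_2$, then multiplying the first equation by $t_1$, the second by $t_2$, and adding yields $t_1\sum_j a_{1j}x_j + t_2\sum_j a_{2j}x_j = t_1 b_1 + t_2 b_2$. So the entire work lies in the converse. Starting from a non-negative integer vector $\vec{x}$ satisfying the single aggregated equation, I want to recover both original equations. To this end I would abbreviate $S_1 = \sum_{j=1}^r a_{1j}x_j$ and $S_2 = \sum_{j=1}^r a_{2j}x_j$; both are non-negative integers since the $a_{ij}$ and the $x_j$ are. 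The aggregated equation then reads $t_1 S_1 + t_2 S_2 = t_1 b_1 + t_2 b_2$, which I would rewrite as
\[
t_1 (S_1 - b_1) = t_2 (b_2 - S_2) \enspace .
\]

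Next I would exploit condition~1. Since $\gcd(t_1,t_2)=1$ and $t_1$ obviously divides the left-hand side, $t_1$ must divide $b_2 - S_2$; symmetrically $t_2$ divides $S_1 - b_1$. Writing $b_2 - S_2 = t_1 k$ for some integer $k$ and substituting back (and cancelling the nonzero factor $t_1$) gives $S_1 - b_1 = t_2 k$ with the \emph{same} integer $k$. Thus
\[
S_1 = b_1 + t_2 k \quad\text{and}\quad S_2 = b_2 - t_1 k \enspace ,
\]
and the whole lemma reduces to the single claim that $k=0$, for then $S_1 = b_1$ and $S_2 = b_2$ as desired.

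The heart of the argument---and the step I expect to be the main obstacle---is ruling out $k\neq 0$ by a sign analysis that plays conditions~2 and~3 against the non-negativity of $S_1,S_2$. For the case $k>0$ I would examine $S_2 = b_2 - t_1 k \ge 0$. If $S_2 > 0$, then some $x_j>0$ with $a_{2j}>0$, so $S_2 \ge a_2$, the smallest nonzero coefficient of the second equation; hence $t_1 k \le b_2 - a_2$, which together with $k\ge 1$ contradicts condition~3 ($t_1 > b_2 - a_2$). If instead $S_2 = 0$, then $t_1 k = b_2$, so $t_1 \mid b_2$, contradicting condition~2. Either way $k>0$ is impossible. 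The case $k<0$ is symmetric: putting $m=-k>0$ one has $S_1 = b_1 - t_2 m \ge 0$, and the alternative ``$S_1>0$ forces $t_2 m \le b_1 - a_1$, contradicting $t_2 > b_1 - a_1$'' versus ``$S_1=0$ forces $t_2\mid b_1$, contradicting condition~2'' again yields a contradiction. Hence $k=0$, which closes the converse inclusion and the proof. The only point to state carefully is the elementary fact that a non-negative integer combination $\sum_j a_{ij}x_j$, once it is strictly positive, is at least $a_i$; this is precisely what makes conditions~2 and~3 cover the two cases $S_i = 0$ and $S_i > 0$ and thereby eliminate every nonzero value of $k$.
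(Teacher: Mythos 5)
Your proof is correct: rewriting the aggregated equation as $t_1(S_1-b_1)=t_2(b_2-S_2)$, extracting a common integer $k$ via coprimality of $t_1,t_2$, and then eliminating $k\neq 0$ by playing condition~3 against $S_i>0$ and condition~2 against $S_i=0$ is exactly the classical aggregation argument of Glover and Woolsey. Note that the paper itself states this lemma without proof, simply citing~\cite{GW1972}, so there is no in-paper proof to compare against; your argument reconstructs the cited source's reasoning, and the $S_i=0$ branch even quietly covers the degenerate situation in which one equation has no nonzero coefficient (where the quantity $a_i$ in condition~3 would be undefined).
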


\noindent
Note that 
\begin{equation} \label{eq:aggregation-coefficients}
t_1 = 1+\max\{b_1,b_2\}\ \mbox{ and } t_2 = 1+t_1 
\end{equation}
is among the choices for $t_1,t_2$ such that the three
conditions mentioned in Theorem~\ref{lem:aggregation}
are satisfied. An iterative application of 
Lemma~\ref{lem:aggregation} leads to the following result:

\begin{corollary} \label{cor:aggregation}
Let $A\vec{x}=\vec{b}$ with $A\in\nats_0^{m \times r}$
and $b\in\nats^m$ be a system of linear diophantine
equations. Let $A_1,\ldots,A_m\in\nats_0^r$ denote 
the rows of $A$. Then there exist ``aggregation 
coefficients'' $t_1,\ldots,t_m\in\nats$ such that
the solution set for $A\vec{x}=\vec{b}$ within $\nats_0^r$
is the same as the solution set for the single equation
\[ 
\left(\sum_{i=1}^{m}t_iA_i\right)\vec{x} = \sum_{i=1}^{m}t_ib_i
\enspace .
\]
\end{corollary}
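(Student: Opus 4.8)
The plan is to prove the corollary by induction on the number $m$ of equations, folding one row at a time into an accumulated single equation by means of Lemma~\ref{lem:aggregation}. The base case $m=1$ is trivial: we take $t_1=1$, so that the system and the single equation literally coincide. For the inductive step I would assume the claim for $m-1$ equations and apply it to the first $m-1$ rows of $A$, obtaining positive integers $s_1,\ldots,s_{m-1}$ such that the single equation $\bar A\vec x=\bar b$, with $\bar A=\sum_{i=1}^{m-1}s_iA_i$ and $\bar b=\sum_{i=1}^{m-1}s_ib_i$, has the same solution set over $\nats_0^r$ as the subsystem formed by the first $m-1$ equations. Note that $\bar b$ is again a strictly positive integer and the entries of $\bar A$ are again non-negative integers, so this folded equation is of exactly the form to which the lemma applies.

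Next I would apply Lemma~\ref{lem:aggregation} to the pair consisting of $\bar A\vec x=\bar b$ and $A_m\vec x=b_m$, choosing the aggregation coefficients $\tau_1,\tau_2$ according to~(\ref{eq:aggregation-coefficients}) with $\bar b$ and $b_m$ in the roles of $b_1$ and $b_2$. The lemma then yields that $(\tau_1\bar A+\tau_2 A_m)\vec x=\tau_1\bar b+\tau_2 b_m$ has the same solution set as the two-equation system $\{\bar A\vec x=\bar b,\ A_m\vec x=b_m\}$. Since the solution set of that system equals the solution set of the full system $A\vec x=\vec b$ (by the induction hypothesis the equation $\bar A\vec x=\bar b$ is equivalent over $\nats_0^r$ to the first $m-1$ equations, and intersecting with the solutions of the $m$-th equation recovers the full system), transitivity gives the desired equivalence. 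Finally, setting $t_i=\tau_1 s_i$ for $i<m$ and $t_m=\tau_2$, the aggregated coefficient vector becomes $\tau_1\bar A+\tau_2 A_m=\sum_{i=1}^{m}t_iA_i$ and its right-hand side becomes $\sum_{i=1}^{m}t_ib_i$, exactly as claimed; each $t_i$ is a product of positive integers and hence lies in $\nats$.

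The point that requires care is checking that the hypotheses of Lemma~\ref{lem:aggregation} are met at the folding step. Two are immediate (non-negative integer coefficients, and the strictly positive right-hand side $\tau_1\bar b+\tau_2 b_m$), and conditions~(1)--(3) are guaranteed by the explicit choice~(\ref{eq:aggregation-coefficients}) as recorded after the lemma. The one genuine subtlety is that condition~(3) refers to the \emph{smallest nonzero} coefficient of each row and therefore presupposes that both $\bar A$ and $A_m$ are nonzero vectors. Here I would invoke the harmless non-degeneracy assumption that every row $A_i$ of $A$ is nonzero: then $A_m$ is nonzero, and $\bar A=\sum_{i<m}s_iA_i$ is nonzero as well, since the $s_i$ are strictly positive. (If some row $A_i$ were zero, then $b_i\ge1$ would render the system infeasible, a case in which the empty solution set is trivially represented by a single infeasible equation and can be dispatched separately.) With the hypotheses verified at every fold, the induction carries through and the corollary follows.
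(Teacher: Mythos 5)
Your proof is correct and follows essentially the route the paper intends: the paper derives Corollary~\ref{cor:aggregation} simply by remarking that it follows from an ``iterative application of Lemma~\ref{lem:aggregation}'', which is exactly the induction you spell out, with the choice~(\ref{eq:aggregation-coefficients}) validating the lemma's hypotheses at each fold and the bookkeeping $t_i=\tau_1 s_i$, $t_m=\tau_2$ giving the claimed form. Your handling of the zero-row degeneracy (which the paper passes over in silence) is a welcome extra; to fully dispatch it one should still exhibit an infeasible equation \emph{of the prescribed aggregated form}, e.g.\ aggregate the nonzero rows as usual, then multiply their coefficients by $N = 1+\sum_{i:\,A_i=\vec{0^r}} b_i$ and set $t_i=1$ on the zero rows, so that the left-hand side is divisible by $N$ while the right-hand side is not.
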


\noindent
We are ready now for the next result:

\begin{theorem} \label{th:hardness-dim1-linear-sets}
The containment problem for $1$-dimensional linear sets
is $\log$-hard in $\Pi_2^p$.
\end{theorem}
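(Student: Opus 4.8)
The plan is to reduce the promise version of the $m$-dimensional containment problem, shown $\log$-hard in Theorem~\ref{th:containment-linear} under a unary encoding, to the $1$-dimensional problem by \emph{aggregating} each $m$-dimensional vector into a single integer. Fix a source instance $(\vec{c},P,Q)$ with $\vec{d}=\vec{0}$ (without loss of generality), $P=\{\vec{p_1},\ldots,\vec{p_k}\}$, subject to the promise~(\ref{eq:promise}). I would introduce the linear map $\phi(\vec{v})=\sum_{i=1}^{m}t_i v_i$ for aggregation coefficients $t_1,\ldots,t_m\in\nats$ to be fixed below, and define the $1$-dimensional instance $c^*=\phi(\vec{c})$, $P^*=\{\phi(\vec{p}):\vec{p}\in P\}$, $Q^*=\{\phi(\vec{q}):\vec{q}\in Q\}$. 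Since $\phi$ is linear, $c^*+\sum_i a_i\phi(\vec{p_i})=\phi(\vec{c}+\sum_i a_i\vec{p_i})$, so the whole reduction hinges on establishing the equivalence $\vec{c}+\langle P\rangle\seq\langle Q\rangle\dund c^*+\langle P^*\rangle\seq\langle Q^*\rangle$.

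The mechanism is the aggregation result. Membership $\vec{w}\in\langle Q\rangle$ asks for a non-negative solution of the system $Q\vec{b}=\vec{w}$; applying $\phi$ row-wise collapses this into the single equation $\sum_j b_j\phi(\vec{q_j})=\phi(\vec{w})$, which is exactly the assertion $\phi(\vec{w})\in\langle Q^*\rangle$. By Corollary~\ref{cor:aggregation} the two have the same non-negative solution set, \emph{provided} the $t_i$ are admissible aggregation coefficients for the right-hand side $\vec{w}$; by~(\ref{eq:aggregation-coefficients}) admissibility only requires the $t_i$ to dominate the components of $\vec{w}$. The crucial observation is that the promise confines the right-hand sides that matter to a bounded range: I would set $B:=\max\{c_i\}+s\cdot\max\{\vec{p_i}[j]\}$ and then choose \emph{one} tuple $t_1,\ldots,t_m$, via the iterated construction behind Corollary~\ref{cor:aggregation} applied with the uniform bound $B$ in place of any instance-specific right-hand side, that is simultaneously admissible for every $\vec{w}$ all of whose components are at most $B$.

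Correctness then splits cleanly. The forward direction $\vec{c}+\langle P\rangle\seq\langle Q\rangle\Rightarrow c^*+\langle P^*\rangle\seq\langle Q^*\rangle$ holds unconditionally by linearity of $\phi$. For the converse, take any $\vec{a}\in\nats_0^k$ and put $\vec{w}=\vec{c}+\sum_i a_i\vec{p_i}$. If $\vec{a}\in K_s$, then every component of $\vec{w}$ is at most $B$, so the chosen coefficients are admissible and aggregation gives $\phi(\vec{w})\in\langle Q^*\rangle\Leftrightarrow\vec{w}\in\langle Q\rangle$; the hypothesis supplies the left-hand side, hence $\vec{w}\in\langle Q\rangle$. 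If $\vec{a}\notin K_s$, the promise~(\ref{eq:promise}) already places $\vec{w}$ in $\langle Q\rangle$, with no appeal to aggregation. Either way $\vec{w}\in\langle Q\rangle$, which is precisely the desired inclusion.

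It remains to control sizes and computability. The aggregated values $\phi(\vec{v})$ grow like $(2B)^{m}$ in magnitude, yet occupy only $O(m\log B)$ bits, which is polynomial because $B$ is polynomial under the unary encoding of the source instance; this is exactly why the $1$-dimensional instance is presented in binary, and it foreshadows the separate polynomial-time result for the unary $1$-dimensional case. I expect the main obstacle to be the uniform admissibility step: the iterated aggregation of Corollary~\ref{cor:aggregation} is normally tailored to a single right-hand side, so one must verify that a fixed tuple $t_1,\ldots,t_m$ — defined by an iterated product dominated by $B$ at each stage and made coprime in successive steps, e.g.\ via the pattern $t_2=t_1+1$ of~(\ref{eq:aggregation-coefficients}) — meets the coprimality, non-divisibility, and domination conditions of Lemma~\ref{lem:aggregation} for every $\vec{w}$ with components at most $B$ at once. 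Two routine wrinkles accompany this: handling components $w_i=0$, where Lemma~\ref{lem:aggregation} assumes strictly positive right-hand sides, and checking that the iterated-product coefficients are logspace-computable; both are standard and do not affect the argument.
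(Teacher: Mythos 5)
Your proposal is correct in substance and ends up with the same reduction map as the paper --- apply a single linear form $\phi(\vec{v})=\sum_{i=1}^{m}t_iv_i$ with Glover--Woolsey aggregation coefficients to the constant vector and to all periods of $P$ and $Q$, then split the correctness argument via the promise exactly as you do (for $\vec{a}\notin K_s$ the promise plus ``aggregation of valid equations stays valid''; for $\vec{a}\in K_s$ exactness of aggregation) --- but it takes a genuinely different route through the middle step that justifies exactness on the critical range. The paper never needs a uniform-in-$\vec{w}$ aggregation statement: working with the full system $\vec{c}+A\vec{x}=B\vec{y}$ in the variables $(\vec{x},\vec{y})$, it adds $uJ\vec{x}$ to both sides of $-A\vec{x}+B\vec{y}=\vec{c}$ (with $u$ the largest entry of $A$, so that all coefficients become non-negative) and then observes that on $K_s$ one has $J\vec{x}=s\cdot\vec{1^m}$ \emph{exactly}, so up to $s$-equivalence the right-hand side is the single fixed positive vector $\vec{c}+su\cdot\vec{1^m}$ and Corollary~\ref{cor:aggregation} applies verbatim, once, as a black box. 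You instead aggregate only the membership system $B\vec{y}=\vec{w}$ --- which neatly sidesteps the negative coefficients and hence the $uJ$-shift, since $B$ is non-negative by construction --- at the price of needing one tuple $t_1,\ldots,t_m$ admissible simultaneously for every $\vec{w}\in\{0,\ldots,B\}^m$. You rightly flag this uniform admissibility as the main obstacle, and it does go through: choosing consecutive-integer stage coefficients exceeding a precomputed stage bound, conditions (1)--(3) of Lemma~\ref{lem:aggregation} hold for every strictly positive right-hand side below the bound, and even your $w_i=0$ wrinkle (where the lemma is literally inapplicable, since every $t$ divides $0$) resolves for this choice: a spurious solution of the aggregate of $\sum_j a_{1j}x_j=0$ and $\sum_j a_{2j}x_j=b_2$ would give $t_1\delta=t_2(b_2-\sigma)$ with $\delta>0$, coprimality forces $t_2\mid\delta$, hence $t_1t_2\le t_1\delta\le t_2b_2$, i.e.\ $t_1\le b_2$, contradicting $t_1>B\ge b_2$. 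Your size and computability accounting (values of magnitude $B^{O(m)}$, hence $O(m\log B)$ bits, with binary output encoding essential) matches the paper's, which disposes of logspace-computability with an equally brief remark. In short: the paper's fixed-right-hand-side trick buys a citation of Corollary~\ref{cor:aggregation} with no reproving, while your route buys a cleaner, shift-free system and makes explicit that only the $Q$-side ever needs to be aggregated --- but it obliges you to actually write out the uniform variant of the lemma rather than treat it as routine.
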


\begin{proof}
We will describe a logspace reduction from the containment 
problem for multidimensional linear sets (under the
promise~(\ref{eq:promise})) to the containment problem 
for $1$-dimensional linear sets.
The proof proceeds in stages. In Stage 1, we express 
the containment problem for multidimensional linear sets
in terms of a system of diophantine equations. The latter
arises naturally when the the containment problem is written
in matrix notation. In Stage~2, we rewrite the system of
diophantine equations so that Corollary~\ref{cor:aggregation}
comes into play and an aggregation into a single equation
takes place. Although the solution set of the single equation 
and the one of the original system of equations do not fully 
coincide, they are related sufficiently closely so that, in 
Stage 3, we finally obtain the desired logspace reduction. \\
Let us start with Stage~1. Let $\vec{c}\in\nats_0^m$,
$P = \{\vec{p_1},\ldots,\vec{p_k}\} \seq \nats_0^m$
and $Q = \{\vec{q_1},\ldots,\vec{q_\ell}\} \seq \nats_0^m$
form an instance of the containment problem. We set
\[
A := [\vec{p_1}\ \ldots\ \vec{p_k}] \in \nats_0^{m \times k}
\mbox{ and }
B := [\vec{q_1}\ \ldots\ \vec{q_\ell}] \in \nats_0^{m \times \ell}
\enspace ,
\]
which yields two matrices with non-negative entries.
With this notation:
\[
L(\vec{c},P) \seq L(\vec{0},Q) \dund
\forall \vec{x}\in\nats_0^k, \exists \vec{y}\in\nats_0^\ell:
\vec{c}+A\vec{x} = B\vec{y} \enspace .
\]
In matrix notation, the promise~(\ref{eq:promise}) 
can be written as follows:
\begin{equation} \label{eq2:promise}
\forall \vec{x} \in \nats_0^k \setminus K_s,
\exists \vec{y}\in\nats_0^\ell: \vec{c}+A\vec{x} = B\vec{y}
\enspace .
\end{equation}
We now proceed to Stage~2 of the proof. We say that two systems 
of diophantine equations with the same collection $(\vec{x},\vec{y})$
of variables are {\em fully equivalent} if their solution sets 
in $\nats_0^{k}\times\nats_0^\ell$ are the same.
We say that they are {\em $s$-equivalent} if their
solution sets in $K_s\times\nats_0^\ell$ are the same.
We use the symbol ``$\equiv$'' for ``full equivalence''
and the symbol ``$\sequiv$'' for ``$s$-equivalence''.
$M_i$ denotes the $i$-th row of a matrix $M$
and $u$ denotes the largest entry of the matrix $A$.
With this notation, we may rewrite the equation
system $\vec{c}+A\vec{x} = B\vec{y}$ as follows:
\begin{eqnarray*}
\vec{c}+A\vec{x} = B\vec{y} & \equiv &
[-A\ \ B]\left(\begin{array}{c}x \\ y\end{array}\right) = \vec{c} \\
& \equiv &
[(uJ-A)\ \ B]\left(\begin{array}{c}x \\ y\end{array}\right) =
\vec{c} + uJ\cdot\vec{x} \\
& \sequiv &
[(uJ-A)\ \ B]\left(\begin{array}{c}x \\ y\end{array}\right) =
\vec{c} + su\cdot\vec{1^m} \\
& \stackrel{Cor.~\ref{cor:aggregation}}{\equiv} &
\left(\sum_{i=1}^{m}t_i(uJ-A)_i\right)\cdot\vec{x} +
\left(\sum_{i=1}^{m}t_iB_i\right)\cdot\vec{y} \\
&& \mbox{} = \sum_{i=1}^{m}t_ic_i + su\left(\sum_{i=1}^{m}t_i\right) \\
& \sequiv &
\sum_{i=1}^{m}t_ic_i +
\left(\sum_{i=1}^{m}t_iA_i\right)\cdot\vec{x} =
\left(\sum_{i=1}^{m}t_iB_i\right)\cdot\vec{y}
\end{eqnarray*}
We move on to Stage~3 of the proof where we obtain the 
desired reduction to the 1-dimensional containment problem:
\begin{eqnarray*}
L(\vec{c},P) \seq L(\vec{0},Q) & \dund &
\forall\vec{x}\in\nats_0^{k}, \exists\vec{y}\in\nats_0^\ell:
\vec{c}+A\vec{x} = B\vec{y} \\
& \dund &
\forall\vec{x}\in\nats_0^{k}, \exists\vec{y}\in\nats_0^\ell: \\
&& \mbox{}
\underbrace{\sum_{i=1}^{m}t_ic_i}_{=:c} +
\underbrace{\left(\sum_{i=1}^{m}t_iA_i\right)}_{=:(p_1,\ldots,p_k)=:\vec{p}^\top}
\cdot\vec{x} =
\underbrace{\left(\sum_{i=1}^{m}t_iB_i\right)}_{=:(q_1,\ldots,q_\ell)=:\vec{q}^\top}\cdot\vec{y} \\
& \dund &
L(c,\vec{p}) \seq L(0,\vec{q})
\end{eqnarray*}
The validity of the 2nd equivalence follows by case analysis:
\begin{description}
\item[Case 1:] $\vec{x} \in K_s$. \\
We make use of the relation $\sequiv$ that we had established
before: within the restricted domain $K_s\times\nats_0^\ell$, 
the solution set for the equation system $\vec{c}+A\vec{x} = B\vec{y}$ 
coincides with solution set for the single equation 
$c+\vec{p}^\top\vec{x} = \vec{q}^\top\vec{y}$.
\item[Case 2:] $\vec{x} \in \nats^k \sm K_s$. \\
We make use of the promise~(\ref{eq2:promise})
and of the observation that the validity
of $\vec{c}+A\vec{x}=B\vec{y}$ implies the validity of 
$c+\vec{p}^\top\vec{x} = \vec{q}^\top\vec{y}$ (because
the aggregation of valid equations always yields another
valid equation).
\end{description}
It is easy to see that the reduction mapping
$(\vec{c},P,Q) \mapsto (c,\vec{p},\vec{q})$ 
is logspace-computable. This concludes the proof. 
\end{proof}

Combining the restrictions of dimensionality $1$ and unary
encoding of numbers, the containment problem for linear sets 
becomes solvable in polynomial time:

\begin{theorem}
The containment problem for $1$-dimensional linear sets with
a unary encoding of numbers is in $P$.
\end{theorem}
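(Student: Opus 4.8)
The plan is to exploit two facts: under a unary encoding every numerical parameter is bounded by a polynomial $N$ in the input length, and a numerical semigroup coincides with a full arithmetic progression beyond a threshold that is polynomial in its largest generator. Let the instance be $L(c,P)$ and $L(d,Q)$ with $P=\{p_1,\ldots,p_k\}$ and $Q=\{q_1,\ldots,q_\ell\}$, and set $g=\gcd(p_1,\ldots,p_k)$ and $h=\gcd(q_1,\ldots,q_\ell)$, where the gcd is taken over the nonzero periods (the degenerate cases in which $P$ or $Q$ consist of zeros only are trivial and I would dispose of them separately). Then $\langle P\rangle \seq g\nats_0$ and $\langle Q\rangle \seq h\nats_0$, and the rescaled sets $\langle P\rangle/g$ and $\langle Q\rangle/h$ are numerical semigroups of gcd $1$.

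First I would isolate a necessary \emph{tail condition}. Since $c$ and each $c+p_i$ lie in $L(c,P)$, containment forces $c\equiv d\pmod h$ and $h\mid p_i$ for all $i$, hence $h\mid g$. Conversely, I claim these two conditions already guarantee that every sufficiently large element of $L(c,P)$ lies in $L(d,Q)$: by a standard Ap\'ery-set / pigeonhole argument the conductor of a numerical semigroup is at most the square of its largest generator, so there is a threshold $B=d+h\cdot(F+1)$, with $F\le (\max_j q_j/h)^2 \le N^2$ the Frobenius number of $\langle Q\rangle/h$, beyond which $\langle Q\rangle$ contains every multiple of $h$. Indeed, if $x\in L(c,P)$ and $x\ge B$, then $h\mid(x-d)$ (because $h\mid g$ and $c\equiv d\pmod h$) and $(x-d)/h$ exceeds $F$, so $x\in L(d,Q)$. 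Crucially $B$ is polynomial in $N$, and this is exactly where the unary encoding is used.

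It then remains to verify containment only for the finitely many elements of $L(c,P)$ below $B$. These I would obtain by the standard unbounded-knapsack dynamic program: in time $O(Bk)$ compute the membership indicator of $\langle P\rangle$ on $\{0,1,\ldots,B\}$ (hence of $L(c,P)$), and likewise in time $O(B\ell)$ the membership indicator of $\langle Q\rangle$ (hence of $L(d,Q)$). The algorithm accepts iff the tail condition holds and every element of $L(c,P)\cap[0,B)$ lies in $L(d,Q)$. Correctness follows from the two directions above: if containment holds then the tail condition is necessary and the small elements are trivially contained; conversely, if both tests pass, small elements are contained by table lookup and large elements by the tail argument.

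The one genuinely non-obvious ingredient — and the step I expect to be the crux — is the polynomial bound on the conductor $h(F+1)$ of $\langle Q\rangle$ (equivalently the Frobenius bound $F\le(\max_j q_j/h)^2$), since under a binary encoding this quantity is exponential and the dynamic program would blow up. I would establish it via the Ap\'ery set modulo the smallest nonzero generator $a$ of $\langle Q\rangle/h$: viewing generation as a walk on $\ints/a\ints$ that adds one generator per step, each residue class $r$ is reached along a simple path of at most $a-1$ steps, so the least semigroup element in class $r$ is at most $(a-1)\max_j(q_j/h)\le(\max_j q_j/h)^2$, and every larger element of class $r$ is reachable by further adding copies of $a$. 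Everything else — the gcd computations, the divisibility checks of the tail condition, the two dynamic programs, and the final comparison — is plainly polynomial, so the whole procedure runs in polynomial time.
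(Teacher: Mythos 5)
Your proposal is correct, and its skeleton is the same as the paper's: gcd-based necessary conditions, a Frobenius/conductor bound that is polynomial under unary encoding (this is indeed the crux, exactly as you identify), and dynamic programming on the polynomially many small values. Within that skeleton you deviate in two places, both legitimately. For the necessary conditions, the paper proves its Claim ($c'\le c$, $g'\mid g$, $g'\mid c-c'$) by a differencing argument: it takes $s$, the largest multiple of $g$ outside $\langle P\rangle$, applies the containment to the two elements $c+s+g$ and $c+s+2g$, and subtracts to get $g=(q_2-q_1)g'$. Your argument is more elementary and slightly stronger: membership of $c$ and of each $c+p_i$ in $L(d,Q)$ immediately yields $c\equiv d\ (\mathrm{mod}\ h)$ and $h\mid p_i$ for every $i$, with no appeal to the Frobenius number of $\langle P\rangle$ at all. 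You also dispense with the paper's normalization (replacing $c$ by $c-c'$ and dividing through by $g'$ so that $g'=1$, then splitting off the case $1\in P'$), instead running everything modulo $h$ with the explicit threshold $B=d+h(F+1)$; this is a matter of taste and equally valid. Finally, where the paper simply cites Brauer's bound $F(P')<(\max(P')-1)\cdot(\min(P')-1)$, you re-derive an equivalent $O(N^2)$ bound self-containedly via the Ap\'ery set of the smallest generator; your cycle-cutting argument on $\ints/a\ints$ is a correct standard proof of that bound, so your write-up buys self-containment at the cost of a citation. One point worth making explicit in a final version: your tail condition omits $d\le c$, but this causes no gap, since $c<B$, so if $d>c$ the table lookup already rejects (one has $c\in L(c,P)\cap[0,B)$ but $c\notin L(d,Q)$); the paper, by contrast, carries $c'\le c$ explicitly in its Claim.
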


\begin{proof}
Consider an input instance given by (the unary encoding 
of) $c,P,c',P'$ with $c,c' \in \nats_0$ and $P,P' \subset \nats$.
Let $g$ (resp.~$g'$) be the greatest common divisor of the
periods in $P$ (resp.~in $P'$). We make the following observation:
\begin{description}
\item[Claim:]
The containment $c + \langle P \rangle \seq c' + \langle P' \rangle$ 
is possible only if $c' \le c$ and if $g'$ is a divisor of $g$ and 
of $c-c'$. 
\end{description}
Given the assertion in the claim, we can accomplish the proof
as follows. Setting $c_0 = c-c'$, our original question, 
``$c + \langle P \rangle \seq c' + \langle P' \rangle$?'', 
is equivalent to ``$c_0 + \langle P \rangle \seq \langle P' \rangle$?''.
We may now even assume that $g' = 1$ (because, if necessary,
we can divide all numerical parameters by $g'$). If $1$ is 
among the periods of $P'$, then the answer
to ``$c_0 + \langle P \rangle \seq \langle P' \rangle$?''
is clearly ``yes''. Suppose now that $1 \notin P'$.
It is well known that $\langle P' \rangle$ contains all 
but finitely many natural numbers~\cite{RG-S2009}. 
Let $F(P')$ (called the {\em Frobenius number} of $P'$) 
denote the largest number in $\nats$ that is not 
contained in $\langle P' \rangle$. It is well known 
that $F(P') < (\max(P')-1)\cdot(\min(P')-1)$~\cite{B1942}. 
The questions ``$x \in c_0 + \langle P \rangle$?''
and ``$x \in \langle P' \rangle$?'' can be answered
for all $x < (\max(P')-1)\cdot(\min(P')-1)$ in the obvious 
way by dynamic programming. Given the answers to these questions,
we can immediately decide 
whether $c_0 + \langle P \rangle \seq \langle P' \rangle$. \\
All that remains to be done is proving the above claim.
Suppose that 
\begin{equation} \label{eq:ass-containment}
c + \langle P \rangle \seq c' + \langle P' \rangle \enspace .
\end{equation}
This obviously implies that $c' \le c$. It is furthermore obvious
that $\langle P \rangle \seq g\cdot\nats_0$ 
and $\langle P' \rangle \seq g'\cdot\nats_0$. Moreover, by the 
definition of the Frobenius number, 
$s := g \cdot F\left((\frac{1}{g} \cdot \langle P \rangle\rangle\right)$
is the largest multiple of $g$ that does not belong 
to $\langle P \rangle$. Hence $c+s+g,c+s+2g \in c+\langle P \rangle$
and, because of~(\ref{eq:ass-containment}), there must 
exist $q_2 > q_1 \ge 1$ such that $c+s+g = c' + q_1 g'$
and $c+s+2g = c' + q_2 g'$. Now we obtain $g = (q_2-q_1)g'$
so that $g'$ is a divisor of $g$. Since
$(c-c') + \langle P \rangle \seq \langle P' \rangle \seq g'\cdot\nats_0$
and $\langle P \rangle$ contains only multiples of $g'$
(because it only contains multiples of $g$), it follows 
that $g'$ must also be a divisor of $c-c'$, which concludes 
the proof of the claim and the proof of the theorem.
\end{proof}

\section{Proof of Theorem~\ref{th:reduction2intexp}}
\label{sec:reduction2intexp}

It is easy to see that the containment problem for simple unary 
$(+,\cup)$-expressions is a member of the complexity class $\Pi_2^p$. 
In somewhat more detail, let $E$ and $E'$ be two simple unary
expressions of the form~(\ref{eq:two-simple-expressions}). 
Then $L(E) \seq L(E')$ iff
\[ 
\forall a \in\{1,2\}^s, \exists a'\in\{1,2\}^{s'}:
\sum_{i=1}^{s}\vec{B_{ia_i}} = \sum_{i=1}^{s'}\vec{B'_{ia'_i}}
\enspace .
\]
The membership in $\Pi_2^p$ is now immediate from a well known
characterization of $\Pi_2^p$ due to Wrathall~\cite{W1976}: 
$L \in \Pi_2^p$ iff there exists a polynomial $q$ and a 
language $L_0\in\mbox{P}$ such that
\[ L = \{x| 
(\forall y_1\mbox{ with }|y_1| \le q(|x|))
(\exists y_2\mbox{ with }|y_2| \le q(|x|)): 
\langle y_1,y_2,x \rangle \in L_0\} \enspace .
\]

It remains to show that it is $\log$-hard in $\Pi_2^p$. To this end, 
we will design 
a logspace reduction from $\ol{\cB_2}^{DNF}$ to this problem. 
Let $f(X_1,X_2)$ be an instance of $\ol{\cB_2}^{DNF}$
(as described in Example~\ref{ex:qbf2}). Since $f$ employs 
only finitely many variables, we may assume 
that $X_i = \{x_{i1},\ldots,x_{in}\}$ for $i=1,2$ and some $n\ge1$. 
As a DNF-formula, $f$ is the disjunction of Boolean monomials,
say $f = M_1 \vee\ldots\vee M_m$. We may clearly assume that
none of the monomials contains the same variable twice. We will 
transform $f(X_1,X_2)$ into simple unary $(+,\cup)$-expressions $E_1$
and $E_2$ such that
\begin{equation} \label{eq:b2-reduction}
(\forall X_1, \exists X_2: f(X_1,X_2)=0) \dund (L(E_1) \seq L(E_2))
\enspace .
\end{equation}
For all $i=1,\ldots,n$ and $j=1,\ldots,m$, let
\[ 
\vec{b_{1i}}[j] = \left\{ \begin{array}{ll}
          1 & \mbox{if $x_{1i} \in M_j$} \\
          0 & \mbox{otherwise}
         \end{array} \right. \enspace ,
\]
i.e., the binary vector $\vec{b_{1i}}\in\{0,1\}^m$ indicates in which monomials
the variable $x_{1i}$ actually occurs. Let $\vec{b'_{1i}}\in\{0,1\}^m$
denote the corresponding vector with indicator bits for the
occurrences of $\ol{x_{1i}}$ within $M_1,\ldots,M_m$.
Let the vectors $\ol{\vec{b_{1i}}}$ and $\ol{\vec{b'_{1i}}}$ be 
obtained from $\vec{b_{1i}}$ 
and $\vec{b'_{1i}}$, respectively, by bitwise negation. Clearly, the 
bits of these vectors indicate the non-occurrences of $x_{1i}$
resp.~$\ol{x_{1i}}$ within $M_1,\ldots,M_m$. 
Let $\vec{b_{2i}}, \vec{b'_{2i}}, \ol{\vec{b_{2i}}}, \ol{\vec{b'_{2i}}}$ be the
corresponding vectors with indicator bits for the occurrences
resp.~non-occurrences of the variable $x_{2i}$. We now define 
a couple of $(+,\cup)$-expressions:
\begin{eqnarray*}
E'_1 = \sum_{i=1}^{n}(\vec{1^m}\cup\vec{1^m}) & \mbox{and} &  
E_1 = E'_1 + \sum_{i=1}^{n}(\ol{\vec{b_{1i}}} \cup \ol{\vec{b'_{1i}}}) \\
E'_2 = \sum_{j=1}^{m}\sum_{i=1}^{2n-1}(\vec{e_j^m} \cup \vec{0^m}) & \mbox{and} &
E_2  = E'_2 + \sum_{i=1}^{n}(\vec{b_{2i}} \cup \vec{b'_{2i}}) \enspace .
\end{eqnarray*}
The following immediate observations will prove useful:
\begin{enumerate}
\item
$L(E'_1) = \{n\cdot\vec{1^m}\}$ and $L(E'_2) = \{0,\ldots,2n-1\}^m$. 
\item
$L(E_1) \seq \{n,\ldots,2n\}^m$ and $L(E_2) \supeq \{n,\ldots,2n-1\}^m$. 
\end{enumerate}
Note that the only vectors of $L(E_1)$ which might perhaps not belong 
to $L(E_2)$ are the ones with at least one component of size $2n$.
The following definitions take care of these ``critical vectors''.
We say that a partial assignment of the variables in $X_1 \cup X_2$
{\em annuls} $M_j$ if one of the literals contained in $M_j$
is set to $0$. Let $\vec{y}\in\{n,\ldots,2n\}^m$. 
An assignment $A_1:X_1\ra\{0,1\}$ is said to be an
{\em $X_1$-assignment of type $\vec{y}$} if the following holds: 
\[ 
\forall j=1,\ldots,m: (\vec{y}[j] = 2n \dund A_1\mbox{ does not annul }M_j) 
\enspace .
\] 
We say that $A_2:X_2\ra\{0,1\}$ is an {\em $X_2$-assignment of type $\vec{y}$}
if the following holds:
\[ 
\forall j=1,\ldots,m: (\vec{y}[j] = 2n \impl A_2\mbox{ annuls }M_j)
\enspace .
\]
The desired equivalence~(\ref{eq:b2-reduction}) is easy to derive
from the following claims:
\begin{description}
\item[Claim 1:]
For every $\vec{y} \in L(E_1)$, there exists an $X_1$-assignment $A_1$
of type $\vec{y}$.
\item[Claim 2:]
For every $A_1:X_1\ra\{0,1\}$, there exists $\vec{y} \in L(E_1)$
such that $A_1$ is an $X_1$-assignment of type $\vec{y}$.
\item[Claim 3:] 
For every $\vec{y} \in \{n,\ldots,2n\}^m$:
\[
\vec{y} \in L(E_2) \dund (\exists A_2:X_2\ra\{0,1\}:
\mbox{$A_2$ is an $X_2$-assignment of type $\vec{y}$}) \enspace .
\]
\item[Proof of Claim 1:]
Pick any $\vec{y} \in L(E_1)$. It follows that $\vec{y}$ is of the form 
\begin{equation} \label{eq:e1}
\vec{y} = n\cdot\vec{1^m} + \sum_{i=1}^{n}\widetilde{\vec{b_{1i}}}\ 
\mbox{ with }\ 
\widetilde{\vec{b_{1i}}} \in \{\ol{\vec{b_{1i}}},\ol{\vec{b'_{1i}}}\}
\enspace .
\end{equation}
If $\widetilde{\vec{b_{1i}}} = \ol{\vec{b_{1i}}}$,
we set $A_1(x_{1i})=0$ else, if $\widetilde{\vec{b_{1i}}} = \ol{\vec{b'_{1i}}}$,
we set $A_1(x_{1i})=1$. We claim that $A_1$ is of type $\vec{y}$.
This can be seen as follows. Pick any $j\in\{1,\ldots,m\}$. 
An inspection of~(\ref{eq:e1}) reveals the following:
\begin{itemize}
\item
Suppose that $\vec{y}[j]=2n$. It follows that $\widetilde{\vec{b_{1i}}}[j] = 1$ 
for $i=1,\ldots,n$. Hence, if $\widetilde{\vec{b_{1i}}} = \ol{\vec{b_{1i}}}$, 
then $A_1(x_{1i})=0$, $\ol{\vec{b_{1i}}}[j]=1$ and, 
therefore, $x_{1i} \notin M_j$. Similarly, 
if $\widetilde{\vec{b_{1i}}} = \ol{\vec{b'_{1i}}}$, 
then $A_1(x_{1i})=1$, $\ol{\vec{b'_{1i}}}[j]=1$ and, 
therefore, $\ol{x_{1i}} \notin M_j$. 
Since these observations hold for all $i=1,\ldots,n$, 
we may conclude that $A_1$ does not annul $M_j$.
\item
Suppose that $\vec{y}[j] \le 2n-1$. Then there exists $i\in\{1,\ldots,n\}$ 
such that $\widetilde{\vec{b_{1i}}}[j] = 0$.
Hence, if $\widetilde{\vec{b_{1i}}} = \ol{\vec{b_{1i}}}$,
then $A_1(x_{1i})=0$, $\ol{\vec{b_{1i}}}[j]=0$ and,
therefore, $x_{1i} \in M_j$. Similarly,
if $\widetilde{\vec{b_{1i}}} = \ol{\vec{b'_{1i}}}$,
then $A_1(x_{1i})=1$, $\ol{\vec{b'_{1i}}}[j]=0$ and,
therefore, $\ol{x_{1i}} \in M_j$.
It follows that $A_1$ does annul $M_j$.
\end{itemize}
The above discussion shows that $A_1$ is of type $\vec{y}$, indeed.
\item[Proof of Claim 2:]
Given any $A_1:X_1\ra\{0,1\}$, we 
set $\vec{y} = n\cdot\vec{1^m} + \sum_{i=1}^{n}\widetilde{\vec{b_{1i}}}$
where $\widetilde{\vec{b_{1i}}} = \ol{\vec{b_{1i}}}$ if $A_1(x_{1i})=0$ and, similarly, 
$\widetilde{\vec{b_{1i}}} = \ol{\vec{b'_{1i}}}$ if $A_1(x_{1i})=1$. Note that, with this
definition of $\vec{y}$, $A_1$ is precisely the $X_1$-assignment that 
we had chosen in the proof of Claim~1. As argued in the proof of
Claim~1 already, $A_1$ is of type $\vec{y}$.
\item[Proof of Claim 3:]
Pick any $\vec{y}\in\{n,\ldots,2n\}^m$. Suppose first 
that $\vec{y} \in L(E_2)$. It follows that $\vec{y}$ 
is of the form
\begin{equation} \label{eq:e2}
\vec{y} = \vec{y'} + \sum_{i=1}^{n}\widetilde{\vec{b_{2i}}}\ 
\mbox{ with }\ \vec{y'} \in \{0,\ldots,2n-1\}^m\ \mbox{ and }\  
\widetilde{\vec{b_{2i}}} \in \{\vec{b_{2i}},\vec{b'_{2i}}\}
\enspace .
\end{equation} 
If $\widetilde{\vec{b_{2i}}} = \vec{b_{2i}}$, we set $A_2(x_{2i})=0$ 
else, if $\widetilde{\vec{b_{2i}}} = \vec{b'_{2i}}$,
we set $A_2(x_{2i})=1$. We claim that $A_2$ is of type $\vec{y}$. 
Consider an index $j\in\{1,\ldots,m\}$ such that $\vec{y}[j]=2n$.
An inspection of~(\ref{eq:e2}) reveals that there 
exists $i\in\{1,\ldots,n\}$ such that $\widetilde{\vec{b_{2i}}}[j] = 1$.
If $\widetilde{\vec{b_{2i}}} = \vec{b_{2i}}$,
then $A_2(x_{2i})=0$, $\vec{b_{2i}}[j]=1$ and, 
therefore, $x_{2i} \in M_j$. Similarly, 
if $\widetilde{\vec{b_{2i}}} = \vec{b'_{2i}}$,
then $A_2(x_{2i})=1$, $\vec{b'_{2i}}[j]=1$ and, 
therefore, $\ol{x_{2i}} \in M_j$. In any case, $A_2$ annuls $M_j$ 
and we may conclude that $A_2$ is of type $\vec{y}$. \\
Suppose now that there exists an $X_2$-assignment $A_2$ that is 
of type $\vec{y}\in\{n,\ldots,2n\}^m$. 
We define $\vec{y''} = \sum_{i=1}^{n}\widetilde{\vec{b_{2i}}}$ 
where $\widetilde{\vec{b_{2i}}} = \vec{b_{2i}}$ if $A_2(x_{2i})=0$ and, similarly,
$\widetilde{\vec{b_{2i}}} = \vec{b'_{2i}}$ if $A_2(x_{2i})=1$. Since $A_2$ is of
type $\vec{y}$, it annuls every $M_j$ with $\vec{y}[j]=2n$. It follows that,
for every $j\in\{1,\ldots,m\}$ with $\vec{y}[j]=2n$, there 
exists $i\in\{1,\ldots,n\}$ such either $x_{2i} \in M_j$ 
and $A_2(x_{2i})=0$ or $\ol{x_{2i}} \in M_j$ and $A_2(x_{2i})=1$. 
In both cases, we have that $\widetilde{\vec{b_{2i}}}[j] = 1$. It follows
from this discussion that $\vec{y''}[j]\ge1$ for every $j$ with $\vec{y}[j]=2n$.
Obviously $\vec{y''}[j] \le n$ for all $j=1,\ldots,m$. 
Since $L(E'_2) = \{0,\ldots,2n-1\}^m$ and $\vec{y}\in\{n,\ldots,2n\}^m$, 
there exists $\vec{y'} \in L(E_2)$ such that $\vec{y} = \vec{y'}+\vec{y''}$.
This decomposition of $\vec{y}$ shows that $\vec{y} \in L(E_2)$.
\end{description}
We are ready now for proving~(\ref{eq:b2-reduction}). Assume first
that the condition on the left hand-side of~(\ref{eq:b2-reduction}) 
is valid. Pick any $\vec{y} \in L(E_1)$. Pick an $X_1$-assignment $A_1$
of type $\vec{y}$ (application of Claim~1). It follows that the monomials $M_j$ 
with $\vec{y}[j]=2n$ are not yet annulled by $A_1$. According to the left hand-side 
of~(\ref{eq:b2-reduction}), there must exist an assignment $A_2:X_2\ra\{0,1\}$ 
that annuls them. In other words: $A_2$ is an $X_2$-assignment of type $\vec{y}$.
We may now conclude from Claim~3 that $\vec{y} \in L(E_2)$, as desired. \\
Suppose now that $L(E_1) \seq L(E_2)$. Pick any assignment $A_1:X_1\ra\{0,1\}$. 
Pick $\vec{y} \in L(E_1)$ such $A_1$ is an $X_1$-assignment of type $\vec{y}$ (application 
of Claim~2). It follows that only the monomials $M_j$ with $\vec{y}[j]=2n$ are not 
yet annulled by $A_1$. Since $\vec{y}$, as an element of $L(E_1)$, must
satisfy $\vec{y}\in\{n,\ldots,2n\}^m$ and must furthermore belong to $L(E_2)$, 
we may conclude from Claim~3 that there exists an $X_2$-assignment
$A_2:X_2\ra\{0,1\}$ of type $\vec{y}$. In other words: $A_2$ annuls all
monomials $M_j$ with $\vec{y}[j]=2n$. It follows from this discussion
that the condition on the left hand-side of~(\ref{eq:b2-reduction}) 
is valid, which concludes the proof.

\section{Open Problems} \label{sec:open-problems}

In the proof of our hardness results, we made essential use 
of the fact that $\langle P \rangle$ contains all linear combinations
of the periods in $P$ with coefficient vectors from $\nats_0^{|P|}$.
We would be interested to know whether the computational
complexity of the containment problem is still the same
when we deal with coefficient vectors from $\nats^{|P|}$
(thereby ruling out $0$-coefficients).

\subparagraph*{Acknowledgements.}

Many thanks go to Dmitry Chistikov and Christoph Haase who pointed 
my attention to~\cite{CHH2016}, a paper that (without mentioning 
this explicitly) yields the $\log$-hardness of the containment 
problem for linear sets of variable dimension.





\begin{thebibliography}{1}

\bibitem{B1942}
Alfred Brauer.
\newblock On a problem of partitions.
\newblock {\em American Journal of Mathematics}, 64(1):299--312, 1942.

\bibitem{CHH2016}
Dmitry Chistikov, Christoph Haase, and Simon Halfon.
\newblock Context-free commutative grammars with integer counters and resets.
\newblock {\em Theoretical Computer Science}, 2016.
\newblock In press. Online version: https://doi.org/10.1016/j.tcs.2016.06.017.

\bibitem{GW1972}
F.~Glover and R.~E.~D. Woolsey.
\newblock Aggregating diophantine equations.
\newblock {\em Zeitschrift {f\"ur} Operations Research}, 16:1--10, 1972.

\bibitem{H1982}
Thiet-Dung Huynh.
\newblock The complexity of semilinear sets.
\newblock {\em Elektronische Informationsverarbeitung und Kybernetik},
  18(6):291--338, 1982.

\bibitem{P1966}
Rohit~J. Parikh.
\newblock On context-free languages.
\newblock {\em Journal of the Association on Computing Machinery},
  13(4):570--581, 1966.

\bibitem{RG-S2009}
Jos{\'e}~C. Rosales and Pedro~A. Garc{\'i}a-S{\'a}nchez.
\newblock {\em Numerical Semigroups}.
\newblock Springer, 2009.

\bibitem{S1977}
Larry~J. Stockmeyer.
\newblock The polynomial-time hierarchy.
\newblock {\em Theoretical Computer Science}, 3(1):1--22, 1977.

\bibitem{W1976}
Celia Wrathall.
\newblock Complete sets and the polynomial-time hierarchy.
\newblock {\em Theoretical Computer Science}, 3(1):23--33, 1976.

\end{thebibliography}

\end{document}